\documentclass[11pt]{article}
\usepackage{latexsym}
\usepackage{fullpage}
\usepackage{latexsym}
\usepackage{times}
\usepackage{epsfig}
\usepackage{amssymb}
\usepackage[ruled,vlined]{algorithm2e}

\newcommand{\be}{\begin{equation}}
\newcommand{\ee}{\end{equation}}
\newcommand{\bea}{\begin{eqnarray}}
\newcommand{\eea}{\end{eqnarray}}
\newcommand\assign{\leftarrow}

\newcommand{\varoptk}{\textnormal{\sc VarOpt$_k$}}
\newcommand{\uniformk}{\textnormal{\sc Unif$_k$}}
\newcommand{\uniformkbase}{\textnormal{\sc Unif$_{k,k+1}$}}
\newcommand{\varoptkbase}{\textnormal{\sc VarOpt$_{k,k+1}$}}
\newcommand{\var}{\mathop{\sf Var}}
\newcommand{\cov}{\mathop{\mathsf{Cov}}}
\newtheorem{theorem}{Theorem} 
\newtheorem{lemma}[theorem]{Lemma} 
\newtheorem{proposition}[theorem]{Proposition} 
 
\newcommand{\qed}{\hfill \rule{1ex}{1ex}\medskip\\} 
\newenvironment{proof}[1][]{\paragraph{Proof{#1}}}{\qed}

\newcommand{\drop}[1]{}

\def\varopt{\textnormal{\sc VarOpt}}
\def\pri{\mbox{\sc pri}}
\def\ws{\mbox{\sc ws}}
\def\rc{\mbox{\sc RC}}
\def\ssc{\mbox{\sc SC}}

\let\hat\widehat
\def\Rl{{\mathbb R}}

\def\P{\textsf{P}}
\newcommand\E{\textsf{E}}

\let\Var\var

\newcommand\Prp[1]{\Pr\!\left[{{#1}}\right]}

\newcommand\req[1]{(\ref{#1})}

\newcommand\eps\varepsilon

\def\ol#1{\overline{#1}}

\newcommand{\VS}{{V\Sigma}}
\newcommand{\SV}{{\Sigma}V}
\newcommand{\SCV}{{\Sigma}CoV}
\widowpenalty=10000
\sloppy

\begin{document}

\title{Stream sampling for variance-optimal estimation of subset sums\thanks{An extended abstract of this paper was presented at the 20th ACM-SIAM Symposium on Discrete Algorithms, 2009.}}

\author{Edith Cohen\thanks{AT$\&$T Labs---Research, Florham Park, NJ, USA (email: \texttt{(edith,duffield,lund,mthorup)@research.att.com})} \and Nick Duffield$^\dagger$ \and Haim Kaplan\thanks{The Blavatnik School of Computer Science, Tel Aviv University, Israel (email: \texttt{haimk@cs.tau.ac.il})} \and Carsten Lund$^\dagger$ \and Mikkel Thorup$^\dagger$ }

\date{}
\maketitle

\date{}

\begin{abstract}
From a high volume stream of weighted items, we want to maintain a
generic sample of a certain limited size $k$ that we can later use to
estimate the total weight of arbitrary subsets. This is the classic
context of on-line reservoir sampling, thinking of the generic sample
as a reservoir. We present an efficient reservoir sampling scheme, $\varoptk$,
that dominates all previous schemes in terms of estimation quality.
  $\varoptk$ provides {\em variance optimal unbiased
estimation of subset sums}. More precisely, if we have seen $n$ items
of the stream, then for {\em any} subset size $m$, our scheme based on $k$
samples minimizes the average variance over all subsets of size
$m$. 
In fact, the optimality is against any off-line scheme with $k$
samples tailored for the concrete set of items seen. In addition to optimal average
variance, our scheme provides tighter worst-case bounds on the
variance of {\em particular} subsets than previously possible.
It is efficient, handling each new item of
the stream in $O(\log k)$ time.
Finally, it is particularly well suited for combination of samples
from different streams in a distributed setting.
\end{abstract}
\section{Introduction}
In this paper we focus on sampling from a high volume stream of
weighted items.  The items arrive 
faster and in larger quantities than
can be saved, so only a sample can be stored efficiently.  We want to
maintain a generic sample of a certain limited size that we can later
use to estimate the total weight of {\em arbitrary} subsets.

This is a fundamental and practical problem. In \cite{JMR05} this is
the basic function used in a database system for streams. Such a
sampling function is now integrated in a measurement system for Internet traffic analysis \cite{Gigascope:sigmod03}.
In this context, items are records
summarizing the flows of IP packets streaming by a router.  Queries on
selected subsets have 
numerous current and potential applications,
including anomaly detection (detecting unusual traffic patterns by
comparing to historic data), traffic engineering and routing 
(e.g., estimating traffic
volume between Autonomous System (AS) pairs), and billing (estimating volume of traffic
to or from a certain source or destination).
It is important that we are not constrained to subsets
known in advance of the measurements. This would preclude
exploratory studies, and would not allow a change in routine questions
to be applied retroactively to the measurements. 
A striking example where the selection is not known in advance was the
tracing of the {\em Internet Slammer Worm} \cite{MPSSSW03}. It turned
out to have a simple signature in the flow record; namely as being udp
traffic to port 1434 with a packet size of 404 bytes.  Once this
signature was identified, the worm could be studied by selecting
records of flows matching this signature from the sampled
flow records. 

We introduce a new sampling and estimation scheme for streams, denoted
$\varoptk$, which selects $k$ samples from $n$ items. $\varoptk$ has
several important qualities: All estimates are {\em unbiased}. The
scheme is {\em variance optimal} in that it simultaneously minimizes
the average variance of weight estimates over subsets of
\textsl{every} size $m<n$.  The average variance optimality is
complemented by optimal {\em worst-case} bounds limiting the variance
over all combinations of input streams and queried subsets.  These
per-subset worst-case bounds are critical for applications requiring
robustness and for the derivation of confidence intervals.
Furthermore, $\varoptk$ is fast. It handles each item in $O(\log k)$
worst-case time, and $O(1)$ expected amortized time for randomly
permuted streams. 

In Section \ref{exp:sec} (Figure \ref{netflix:fig}) we demonstrate the
estimation quality of $\varoptk$ experimentally via a comparison with other
reservoir sampling schemes on the Netflix Prize data set
\cite{netflix}. With our implementation of $\varoptk$, the time to
sample 1,000 items from a stream of 10,000,000 items was only 7\%
slower than the time required to read them.

Ignoring the on-line efficiency for streams, there has been several
schemes proposed that satisfy the above variance properties both from
statistics \cite{Cha82,Til96} and indirectly from computer science
\cite{Sri01}.  Here we formulate the sampling operation $\varoptk$ as
a general recurrence, allowing independent $\varoptk$ samples from
different subsets to be naturally combined to obtain a $\varoptk$
sample of the entire set. The schemes from \cite{Cha82,Til96} fall out
as special cases, and we get the flexibility needed for fast on-line
reservoir sampling from a stream. The nature of the recurrence is also
perfectly suited for distributed settings.

Below we define the above qualities more precisely and present an
elaborate overview of previous work.

\subsection{Reservoir sampling with unbiased estimation}
The problem we consider is classically known as reservoir sampling
\cite[pp. 138--140]{Knu69}.  In reservoir sampling, we process a
stream of (weighted) items. The items arrive one at the time, and a
reservoir maintains a sample $S$ of the items seen thus far. When a
new item arrives, it may be included in the sample $S$ and old
items may be dropped from $S$. Old items outside $S$ are never
reconsidered.   We think of estimation as an integral part of
sampling. Ultimately, we want to use a sample to estimate the total
weight of any subset of the items seen so far. Fixing notation, we
are dealing with a stream of items where item $i$ has a positive
weight $w_i$. For some integer capacity $k\geq 1$, we maintain a
reservoir $S$ with capacity for at most $k$ samples from the items
seen thus far.  Let $[n]=\{1,\ldots ,n\}$ be the set of items
seen. With each item $i\in S$ we store a weight estimate $\hat
w_i$, which we also refer to as {\em adjusted weight}. 
For items $i\in [n]\setminus S$  we have an
implicit zero estimate $\hat w_i=0$.  We require these estimators to
be unbiased in the sense that $\E[\hat w_i]=w_i$.  A typical example
is the classic Horvitz-Thompson estimator \cite{HT52} setting $\hat
w_i=w_i/\Pr[i\in S]$ if $i\in S$. 

Our purpose is to estimate arbitrary subset sums from the sample. 
For any subset
$I\subseteq [n]$, we let $w_I$ and $\hat w_I$ denote $\sum_{i\in
I}w_i$ and $\sum_{i\in I}\hat w_i$, respectively.  By linearity of
expectation $\E[\hat w_I]=w_I$. Since all unsampled items have
$0$ estimates, we get $\hat w_{I\cap S}=\hat w_I$. Thus
$\hat w_{I\cap S}$, the sum of the adjusted weights of items
from the sample that are members of $I$, 
is an unbiased estimator of $w_I$.

Reservoir sampling thus addresses two
issues:
\begin{itemize}
\item The streaming issue \cite{Mut05} where with limited
memory we want to compute a sample from a huge stream that passes by only once. 
\item The incremental data structure issue of maintaining a sample as new weighted items
are inserted. In our case, we use the sample to provide quick
estimates of sums over arbitrary subsets of the items seen thus far.
\end{itemize}
Reservoir versions of different sampling schemes are presented 
in~\cite{CMN99,CK:podc07,DLT07,FMR62,ES:IPL2006,Vit85}.

\drop{
$\varoptk$ is
differentiated from previous schemes by a particularly 
clean design: Other schemes derive the adjusted weights, including the
final ones, using auxiliary information, e.g., a the total weight seen
thus far.  $\varoptk$ maintains adjusted weights as part of the
on-line process without  {\em any\/} additional auxiliary
information.  More precisely, $\varoptk$ does not distinguish between
previously-sampled items with adjusted weights and a
new item with its original weights.  $\varoptk$ processes a new item
from the data stream by applying variance-optimal sampling of $k$ out
of $k+1$ items to the current reservoir and the new item
without distinction. Surprisingly, it turns out that each such
``local'' step is uniquely determined by variance-optimality and that
global variance optimality follows.
}

\subsection{Off-line sampling}
When considering the qualities of the sample, we compare our on-line 
scheme, $\varoptk$, with a powerful arbitrary off-line sampling scheme
which gets the $n$ weighted items up front, and can tailor the sampling
and estimation freely to this concrete set, not having to worry about
efficiency or the arrival of more items. The only restriction is
the bound $k$ on the number of samples. More abstractly, the off-line
sampling scheme is an arbitrary probability distribution $\Omega$ over 
functions $\hat w:[n]\rightarrow \mathbb R$ from items $i$ to weight 
estimates $\hat w_i$ which is unbiased
in the sense that $\E_{\hat w\leftarrow \Omega}[\hat w_i]=w_i$, and which has at most
$k$ non-zeros.

\subsection{Statistical properties of target}\label{sec:properties}
The sampling scheme we want should satisfy some classic goals from statistics.
Below we describe these goals. Later we will discuss their relevance to
subset sum estimation. 

\paragraph{(i)} {\em Inclusion probabilities proportional to size (ipps)}.
To get
$k$ samples, we want each item $i$ to be sampled with probability
$p_i=kw_i/w_{[n]}$.  This is not possible if some item $j$ has more
than a fraction $k$ of the total weight. In that case, the standard is
that we include $j$ with probability $p_j=1$, and recursively ipps
sample $k-1$ of the remaining items. In the special case where we start
with $k\geq n$, we end up including
all items in the sample. The included items are given the
standard Horvitz-Thompson estimate $\hat w_i=1/p_i$.

Note that ipps only considers the marginal distribution on each item, so many
joint distributions are possible and in itself, it only leads
to an expected number of $k$ items.\smallskip

\paragraph{(ii)} {\em Sample contains at most $k$ items}. Note that (i) and (ii) together
implies that the sample contains exactly $\min\{k,n\}$ items.\smallskip

\paragraph{(iii)} {\em No positive covariances} between distinct
adjusted weights.\smallskip

\noindent From statistics, we know several schemes satisfying the
above goals (see, e.g., \cite{Cha82,Til96}), but they
are not efficient for on-line reservoir sampling. In addition to the above
goals, we will show that $\varoptk$ estimates admit standard 
Chernoff bounds.

\drop{\paragraph{(iii$+$)} Joint inclusion and joint exclusion probabilities are bounded by the respective product on single items: for any subset $J$ of items, the probability that they are all included in the sample is at most
$\prod_{i\in J} p_i$ and the probability that they are all excluded from the sample is at most $\prod_{i\in J} (1-p_i)$.

\noindent
It is standard that a special case of
Property (iii$+$) implies (iii): For any $i,j$,
$p_{i,j}\leq p_i p_j$ 
combined with Horvitz-Thompson estimators implies nonnegative covariance
between $\hat w_i$ and $\hat w_j$.

\noindent The statistics literature includes schemes by Chao \cite{Cha82}
and Till\'e \cite{Til96} that satisfy the
above goals.  The sample distributions realized by these methods are
instances of our constructions.  Till\'e's elimination
scheme\cite{Til96}, which removes one item at a time, is inherently offline and unsuitable for
reservoir sampling. Chao's scheme~\cite{Cha82}
is suitable for use when items are added incrementally  and in fact
realizes the same sample distribution as the stream implementation of 
$\varoptk$, but is not efficient.  
Both schemes fall in our general $\varoptk$ recurrence and therefore 
fulfill goals (i)-(iii+) (Only properties (i)-(iii) are
established in \cite{Cha82,Til96}).

Taking $w_i\leftarrow p_i$, the (off-line) problem of realizing these goals
is equivalent to sampling exactly $k$ items, using
inclusion probabilities $p_i$ that sum to $k$, 
in a way that satisfies $(iii+)$.
An algorithm for the off-line problem (that inputs the
desired inclusion probabilities) was provided by Srinivasan~\cite{Sri01}.
}

\subsection{Average variance optimality}\label{sec:average}
Below we will discuss some average variance measures that are automatically 
optimized by goal (i) and (ii) above.

When $n$ items have arrived, for each subset size $m\leq n$, 
we consider the average variance for subsets of size $m\leq n$:
\[V_m=\E_{I\subseteq [n], |I|=m}\left[\Var[\hat w_I]\right]
=\frac{\sum_{I\subseteq [n], |I|=m}\left[\Var[\hat w_I]\right]}{{n
\choose m}}.\] 
Our $\varoptk$ scheme is variance optimal in the following strong sense.
For each reservoir size $k$,
stream prefix of $n$ weighted items, and subset size $m$,
there is no off-line sampling scheme with $k$ samples
getting a smaller average variance $V_m$ than our generic $\varoptk$.

The average variance measure $V_{m}$ was introduced in \cite{ST07} where it
was proved that
\begin{equation}\label{eq:Vm}
V_{m}=\frac mn\left(\frac{n-m}{n-1}\,\SV+\frac{m-1}{n-1}\,\VS\right) \ ,
\end{equation}
Here $\SV$ is the sum of individual variances while $\VS$ is the
variance of the estimate of the total, that is,
\begin{eqnarray*}
\SV&=&\sum_{i\in[n]}\Var[\hat w_i]\ =\ n V_1\\
 \VS&=&\Var[\sum_{i\in[n]}\hat w_i]\ =\ \Var[\hat w_{[n]}]\ =\ V_n.
\end{eqnarray*}
It follows that we minimize $V_m$ for all $m$ if and only if we
simultaneously minimize $\SV$ and $\VS$, which is exactly what $\varoptk$ does.
The optimal value for $\VS$ is $0$, meaning that the estimate of
the total is exact.

Let $W_p$ denote  the expected variance
of a random subset including each item $i$ independently with some
probability $p$. It is also shown in \cite{ST07} that
$W_{p}=p\left((1-p)\SV+p\VS\right)$. 
So if we 
simultaneously minimize $\SV$ and $\VS$, we also minimize 
$W_{p}$. 
It should be noted that both $\SV$ and $\VS$ are known measures 
from statistics (see, e.g., \cite{SSW92} and concrete examples in the
next section). It is the implications for average variance over 
subsets that are from \cite{ST07}.

With no information given about which kind of subsets are to be estimated, it
makes most sense to optimize average variance measures like those above
giving each item equal opportunity
to be included in the estimated subset. If the
input distributions are not too special, then we expect
this to give us the best estimates in practice, using variance as
the classic measure for estimation quality.

\paragraph{Related auxiliary variables}\label{sec:aux}
We now consider the case where we for each item are interested in an auxiliary weight $w'_i$.
For these we use the estimate 
${\hat w}'_i=w'_i\hat w_i/w_i$,
which is unbiased since $\hat w_i$ is unbiased. 
Let $\VS'=\sum_{i\in[n]}{\hat w}'_i$ 
be the variance on the estimate of the total for the auxiliary variables. 

We will argue that we expect to do best possible on $\VS'$ using $\varoptk$,
assuming that the $w'_i$ are randomly generated from the $w_i$. Formally
we assume each $w'_i$ is generated as $w'_i=x_iw_i$
where the $x_i$ are drawn independently from the same distribution $\Xi$. 
We consider
expectations $\E_\Xi$ for random choices of the vector $x=(x_i)_{i\in[n]}$, that is, formally $\E_\Xi[\VS']=\E_{x\leftarrow \Xi^n}\left[\VS'\,|\,x\right]$.
We will prove
\begin{equation}\label{eq:aux}
\E_\Xi[\VS']=\var[\Xi]\SV+\E[\Xi]^2\VS\textnormal,
\end{equation}
where $\var[\Xi]=\var_\Xi[x_i]$ and $\E[\Xi]=\E_\Xi[x_i]$ for every $x_i$.
From \req{eq:aux} it follows that we minimize $\E_\Xi[\VS']$ when
we simultaneously minimize $\SV$ and $\VS$ as we do with $\varoptk$.
Note that if the $x_i$ are
0/1 variables, then the ${\hat w}'_i$ represent a random subset, 
including each item independently as in $W_p$ above. The proof
of \req{eq:aux} is found in Appendix \ref{sec:aux-proof}

\paragraph{Relation to statistics}
The above auxiliary variables can be thought of as modeling a
classic scenario in statistics, found in text books such as
\cite{SSW92}. We are interested in some weights $w'_i$ that will only
be revealed for sampled items. However, for every $i$, we have
a known approximation $w_i$ that we can use in deciding which items
to sample. As an example, the $w_i'$ could be household incomes while
the $w_i'$ where approximations based on postal codes. The main purpose
of the sampling is to estimate the total of the $w'_i$. 
When evaluating different schemes, \cite{SSW92} considers $\VS$, stating that
if the $w'_i$ are proportional to the $w_i$, then the variance $\VS'$ on
the estimated total $\sum_i\hat w_i'$ is proportional
to $\VS$, and therefore we should minimize $\VS$. This corresponds
to the case where $\var[\Xi]=0$ in \req{eq:aux}. However, \req{eq:aux}
shows that $\SV$ is also important to $\VS'$ if the relation
between $w_i'$ and $w_i$ is not just proportional, but also has a
random component.

As stated, $\SV$ is not normally the focus in statistics, but
for Poisson sampling where each item is sampled independently,
we have $\SV=\VS$, and studying this case, it is shown in  \cite[p. 86]{SSW92} 
that the ipps of goal (i) uniquely minimizes $\SV$ (see \cite{DLT07} for a proof
working directly on the general case allowing dominant items). 
It is also easy to verify that
conditioned on (i), goal (ii) is equivalent to $\VS=0$ (again this
appears to be standard, but we couldn't find a reference for the 
general statement.
The argument is trivial though. Given the (i), the only variability 
in the weight estimates returned is in the number of sampled estimates
of value $\tau$, so the estimate of the total is variable if and only
if the number of samples is variable). The classic goals (i) and (ii) are thus equivalent to minimizing $\SV$ and $\VS$, hence all the
average variances discussed above. 

\subsection{Worst-case robustness}\label{sec:robustness}
In addition to minimizing the average variance, $\varoptk$ has some
complimentary worst-case robustness properties, limiting the variance
for every single (arbitrary) subset. We note that any such bound has
to grow with the square of a scaling of the weights.  This kind of robustness
is important for applications seeking to minimize worst-case
vulnerability. The robustness discussed below is all a consequence of
the ipps of goal (i) combined with the non-positive covariances of goal (iii).

With the Horvitz-Thompson estimate, the variance of item $i$ is
$w_i^2(1/p_i-1)$. With ipps sampling, $p_i\geq\min\{1,k w_i/w_{[n]}\}$. This
gives us the two bounds $\Var[\hat w_i]< w_iw_{[n]}/k$ and 
$\Var[\hat w_i]< (w_{[n]}/(2k))^2$ (for the second bound
note that $p_i<1$ implies $w_i<w_{[n]}/k$). Both of these bounds are asymptotically
tight in that sense that there are instances for which no sampling scheme
can get a better leading
constant. More precisely, the bound $\Var[\hat w_i]< w_iw_{[n]}/k$ is asymptotically tight 
if every $i$ has $w_i=o(w_{[n]}/k)$, e.g., when
sampling $k$ out of $n$ units, the individual
variance we get is $(n/k)-1$. The bound $(w_{[n]}/(2k))^2$ is tight
for $n=2k$ unit items.
In combination with the non-positive covariances of goal (iii), we 
get that every
subset $I$ has weight-bounded variance $\Var[\hat w_I]\leq w_I w_{[n]}/k$,
and cardinality bounded variance $\Var[\hat w_I]\leq |I|(w_{[n]}/2k)^2$.

\drop{
\subsection{Confidence bounds}\label{sec:confidence}
Property (iii$+$) allows us to apply Chernoff-like derivations~\cite{PancSri:sicomp97,Sri01},
that are based on the assumption that items' inclusions in
the sample are independent random variables, to $\varoptk$ samples, and
obtain upper and lower bounds on subset sizes.
}

\subsection{Efficient for each item}
With $\varoptk$ we can handle each new item of the stream in $O(\log
k)$ worst-case time. 
In a realistic implementation with floating point
numbers, we have some precision $\wp$ and accept an error of
$2^{-\wp}$. We will prove an $\Omega(\log k/\log\log k)$ lower bound on the
worst-case time for processing an item on the word RAM for any
floating point implementation of a reservoir sampling scheme with
capacity for $k$ samples which satisfies goal (i) minimizing $\SV$. 
Complementing that we
will show that it is possible to handle each item in $O(\log\log k)$
amortized time. If the stream is viewed as a random permutation of the 
items, we will show that the expected amortized cost per item is only constant.

\subsection{Known sampling schemes}\label{sec:all-schemes}
We will now discuss known sampling schemes in relation to the qualities
of our new proposed scheme:
\begin{itemize}
\item Average variance optimality of Section \ref{sec:average} following
from goal (i) and (ii).
\item The robustness of Section \ref{sec:robustness} following from
goal (i) and (iii).  
\item Efficient reservoir sampling implementation with capacity
for at most $k$ samples; efficient distributed implementation.
\end{itemize}
The statistics literature contains many sampling
schemes~\cite{SSW92,Tille:book}
that share some of these qualities, but then they all
perform significantly worse on others. 



\paragraph*{Uniform sampling without replacement}
In uniform sampling without replacement, we pick a sample of $k$ items
uniformly at random. If item $i$ is sampled it gets the
Horvitz-Thompson weight estimate $\hat w_i=w_i n/k$.  Uniform sampling
has obvious variance problems with the frequently-occurring heavy-tailed
power-low distributions, where
a small fraction of dominant items accounts for
a large fraction of the total weight \cite{AFT98,PKC96},
 because it is likely to miss the dominant items. 

\paragraph*{Probability proportional to size sampling with replacement (ppswr)}
In probability proportional to size sampling (pps) with replacement (wr), each sample $S_j\in [n]$, $j\in[k]$, is
independent, and equal to $i$ with probability $w_i/w_{[n]}$. Then
$i$ is sampled if $i=S_j$ for some $j\in[k]$. This happens with
probability $p_i=1-(1-w_i/w_{[n]})^k$, and if $i$ is sampled, it
gets the Horvitz-Thompson estimator $\hat w_i=w_i/p_i$. Other
estimators have been proposed, but we always have the same problem
with heavy-tailed distributions: if a few dominant items contain most of
the total weight, then most samples will be copies of these dominant
items. As a result, we are left with comparatively few samples of
the remaining items, and few samples imply high variance no matter
which estimates we assign.

\paragraph*{Probability proportional to size sampling without replacement (ppswor)}
An obvious improvement to ppswr is to sample without replacement
(ppswor). Each new item is then chosen with probability proportional
to size among the items not yet in the sample. 
With ppswor, unlike ppswr,
the probability that an item is included in the sample is a 
complicated function of all
the item weights, and therefore the Horvitz-Thompson
estimator is not directly applicable. A ppswor 
reservoir sampling and estimation procedure is, however, 
presented in \cite{CK:podc07,CK07,bottomk:VLDB2008}. 

Even though ppswor resolves the ``duplicates problem'' of ppswr, we claim
here a negative result for {\em any} ppswor estimator: 
in Appendix \ref{sec:ppswor-bad}, we will present an
instance for any sample size $k$ and number of items $n$  such
that any estimation based on up to $k+(\ln k)/2$ ppswor samples will 
perform a factor $\Omega(\log k)$ worse than $\varoptk$
for {\em every} subset size $m$. This 
is the first such negative result for the classic ppswor besides
the fact that it is not strictly optimal. 

\paragraph*{Ipps Poisson sampling} 
It is more convenient to think of ipps sampling in terms of a {\em threshold\/}
$\tau$. We include in the sample $S$ every item with weight $w_i\geq\tau$, using
the original weight as estimate $\hat w_i=w_i$.
An item $i$ with weight $w_i<\tau$ is included with probability $p_i=w_i/\tau$,
and it gets weight estimate $\tau$ if sampled.

For an expected number of $k<n$ samples, we use the unique
$\tau=\tau_k$ satisfying 
\begin{equation}\label{eq:thr}
\sum_i p_i =\sum_i\min\{1,w_i/\tau_k\}=k.
\end{equation}
For $k\geq n$, we define
$\tau_k=0$ which implies that all items are included. This threshold centric view of
ipps sampling is taken from \cite{DLT05}.


If the threshold $\tau$ is given, and if we are satisfied with Poisson sampling,
that is, each item is sampled independently, then we can trivially
perform the sampling from a stream. In \cite{DLT07} it is shown how we can
adjust the threshold as samples arrive to that we always have a reservoir with
an expected number of $k$ samples, satisfying goal (i) for the items seen
thus far. Note, however, that we may easily violate goal (ii) of having
at most $k$ samples.

Since the items are sampled independently, we have zero covariances, so
(iii) is satisfied along with the all the robustness of Section
\ref{sec:robustness}. However, the average variance of Section
\ref{sec:average} suffers. More precisely, with zero covariances, we
get $\VS=\SV$ instead of $\VS=0$. From \req{eq:Vm} we get that for
subsets of size $m$, the average variance is a factor $(n-1)/(n-m)$
larger than for a scheme satisfying both (i) and (ii).  Similarly we
get that the average variance $W_{\frac12}$ over all subsets is larger
by a factor 2.

\paragraph*{Priority sampling}
Priority sampling was introduced in \cite{DLT07} as a threshold
centric scheme which is tailored for reservoir sampling with $k$ as a
hard capacity constraint as in (ii). It is proved in \cite{Sze06} that
priority sampling with $k+1$ samples gets as good $\SV$ as the optimum
obtained by (i) with only $k$ samples. Priority sampling has zero
covariances like the above ipps Poisson sampling, so it satisfies
(iii), but with $\VS=\SV$ it has the same large average variance for
larger subsets.

\paragraph*{Satisfying the goals but not with efficient reservoir sampling}
As noted previously, there are several schemes satisfying all our
goals \cite{Cha82,Til96,Sri01}, but they are not efficient for
reservoir sampling or distributed data.  Chao's scheme \cite{Cha82}
can be seen as a reservoir sampling scheme, but when a new item
arrives, it computes all the ipps probabilities from scratch in $O(n)$
time, leading to $O(n^2)$ total time. Till\'e \cite{Til96} has
off-line scheme that eliminates items from possibly being in the
sample one by one (Till\'e also considers a complementary scheme that
draws the samples one by one).  Each elimination step involves
computing elimination probabilities for each remaining item. As such,
he ends up spending $O((n-k)n)$ time ($O(kn)$ for the complementary
scheme) on selecting $k$ samples. Srinivasan \cite{Sri01} has
presented the most efficient off-line scheme, but cast for a different
problem. His input are the desired inclusion probabilities $p_i$ that
should sum to $k$.  He then selects the $k$ samples in linear time by
a simple pairing procedure that can even be used on-line. However, to
apply his algorithm to our problem, we first need to compute the ipps
probabilities $p_i$, and to do that, we first need to know all the
weights $w_i$, turning the whole thing into an off-line linear time
algorithm. Srinivasan states that he is not aware of any previous
scheme that can solve his task, but using
his inclusion probabilities, the above mentioned older schemes
from statistics \cite{Cha82,Til96} will do the job, albeit less
efficiently. 
We shall discuss our technical relation to \cite{Cha82,Til96}
in more detail in Section \ref{sec:Chao-Tille}.
Our contribution is a scheme $\varoptk$ that satisfies all our goals
(i)--(iii) while being efficient reservoir sampling from a stream,
processing each new item in $O(\log k)$ time.

\subsection{Contents}\label{sec:contents}
In Section \ref{sec:varopt} we will present our recurrence to generate 
$\varoptk$ schemes, including those from \cite{Cha82,Til96} as special cases. 
In Section \ref{sec:recur} we will prove that the general method works.
In Section \ref{sec:efficient} we will present efficient implementations,
complemented in Section \ref{sec:lower-bound} with a lower bound.
In Section \ref{exp:sec} we present an experiment comparison with
other sampling and estimation scheme. Finally, in Section \ref{sec:Chernoff}
we prove that our $\varoptk$ schemes actually admit the kind of Chernoff
bounds we usually associate with independent Poisson samples.

\section{\varoptk}\label{sec:varopt}
By $\varoptk$ we will refer to any unbiased sampling and estimation
scheme satisfying our goals (i)--(iii) that we recall below.
\begin{description}
\item[\textnormal{(i)}] Ipps. In the rest of the paper, we 
use the threshold centric definition from
\cite{DLT05} mentioned under ipps Poisson sampling
in Section \ref{sec:all-schemes}. Thus we have the sampling probabilities 
$p_i=\min\{1,w_i/\tau_k\}$ where $\tau_k$ is the unique value such 
that $\sum_{i\in[n]}\min\{1,w_i/\tau_k\}=k$ assuming $k<n$; otherwise
$\tau_k=0$ meaning that all items are sampled. The expected
number of samples is thus $\min\{k,n\}$.
A sampled item $i$ gets
the Horvitz-Thompson estimator $w_i/p_i=\max\{w_i,\tau_k\}$.
We refer to $\tau_k$ as
{\em the threshold} when $k$ and the weights are understood. 
\item[\textnormal{(ii)}] At most $k$ samples. Together with (i) this
means exactly $\min\{k,n\}$ samples.
\item[\textnormal{(iii)}] No positive covariances.
\end{description}
Recall that these properties imply all variance qualities mentioned in
the introduction.

As mentioned in the introduction, a clean
design that differentiates our
$\varoptk$ scheme from preceding schemes 
is that we can just sample from samples without relying on auxiliary 
data.
To make sense of this statement, we let all sampling scheme
operate on some adjusted weights, which initially are the original
weights. When we sample some items with adjusted weight, we
use the resulting weight estimates as new adjusted weights, treating them
exactly as if they were original weights.

\subsection{A general recurrence}\label{sec:gen-recurse}
Our main contribution is a general recurrence for generating $\varoptk$ schemes.
Let $I_1,...,I_m$ be disjoint non-empty sets of weighted items, and $k_1,...,k_m$ be
integers each at least as large as $k$. Then 
\begin{equation}\label{eq:gen-recurse}
\varoptk(\bigcup_{x\in[m]}I_x)=\varoptk(\bigcup_{x\in[m]}\varopt_{k_x}(I_x))
\end{equation}
We refer to the calls to $\varopt_{k_x}$ on the right hand side as the 
{\em inner subcalls}, the call to $\varoptk$ as the
{\em outer subcall}. The call to $\varoptk$ on the left hand
side is the {\em resulting call}. The recurrence states that
if all the subcalls are $\varoptk$ schemes (with the $k_x$ replacing $k$ 
for the inner subcalls), that is, unbiased sampling
and estimation schemes satisfying properties (i)--(iii), then the resulting call is also a $\varoptk$ scheme.
Here we assume that the random choices of different subcalls 
are {\em independent} of each other.

\subsection{Specializing to reservoir sampling}\label{sec:stream}
To make use of \req{eq:gen-recurse} in a streaming context, first as a base case,
we assume an implementation of $\varoptk(I)$ when $I$ has $k+1$ items, denoting
this procedure $\varoptkbase$. This is very simple and has been done before
in \cite{Cha82,Til96}. Specializing \req{eq:recurse} with $m=2$, $k_1=k_2=k$,
$I_1=\{1,...,n-1\}$
and $I_2=\{n\}$, we get
\begin{equation}\label{eq:recurse}
\varoptk([n])
=\varoptkbase(\varoptk([n-1])\cup\{n\}).
\end{equation}
With \req{eq:recurse} we immediately
get a $\varoptk$ reservoir sampling algorithm: the first $k$ items
fill the initial reservoir. Thereafter, whenever a new item arrives, we add
it to the current reservoir sample, which becomes of size $k+1$.
Finally we apply $\varoptkbase$ sample to the result. In the
application of $\varoptkbase$ we do not distinguish between items
from the previous reservoir and the new item.

\subsection{Relation to Chao's and Till\'e's procedures}\label{sec:Chao-Tille}
When we use \req{eq:recurse}, we generate exactly the same
distribution on samples as that of Chao's procedure
\cite{Cha82}. However, Chao does not use adjusted weights, let alone the
general recurrence. Instead,
when a new item $n$ arrives, he computes the new ipps probabilities
using the recursive formula from statics mentioned under (i) in Section 
\ref{sec:properties}. This formulation may involve details of all
the original weights even if we are only want the inclusion
probability of a given item. Comparing the new and the previous probabilities,
he finds the distribution for which item to drop. Our recurrence with
adjusted weights is simpler and more efficient because we can forget
about the past: the original weights and the inclusion probabilities
from previous rounds.

We can also use \req{eq:gen-recurse} to derive the elimination procedure
of Till\'e \cite{Til96}. To do that, we set $m=1$ and
$k_1=k+1$, yielding the recurrence 
\[\varoptk(I)=\varoptkbase(\varopt_{k+1}(I))\]
This tells us how to draw $k$ samples by eliminating the $n-k$ other
items one at the time. Like Chao, Till\'e \cite{Til96} computes
the elimination probabilities for all items in all rounds directly from the original
weights. Our general recurrence \req{eq:gen-recurse} based on adjusted weights is
more flexible, simpler, and more efficient.

\subsection{Relation to previous reservoir sampling schemes}
It is easy to see that nothing like \req{eq:recurse} works
for any of the other reservoir sampling schemes from the introduction.
E.g., if $\uniformk$ denotes
uniform sampling of $k$ items with associated estimates, then
\[\uniformk([n]\})
\neq\uniformkbase(\uniformk([n-1])\cup\{n\}).\]
With equality, this formula would say that item $n$
should be included with probability $k/(k+1)$. However,
to integrate item $n$ correctly in the uniform reservoir 
sample, we  should only include it with probability $k/n$.
The standard algorithms \cite{FMR62,Vit85} therefore
maintain the index $n$ of the last arrival. 

We have the same issue with all the other schemes: ppswr, ppswor,
priority, and Poisson ipps sampling. For each of these
schemes, we have a global description of what the reservoir should
look like for a given stream. When a new item arrives, we cannot just
treat it like the current items in the reservoir, sampling $k$ out of
the $k+1$ items. Instead we need some additional information in order
to integrate the new item in a valid reservoir sample of the new
expanded stream. In particular, priority sampling \cite{DLT07} and
the ppswor schemes of \cite{CK:podc07,CK07,bottomk:VLDB2008} use
priorities/ranks for all items in the reservoir, 
and the reservoir version of Poisson ipps sampling from \cite{DLT05, DLT07} uses the sum 
of all weights below the current threshold. 

\paragraph{Generalizing from unit weights}
The standard scheme \cite{FMR62,Vit85} for sampling $k$ unit items is
variance optimal and we can see $\varoptk$ as a generalization to
weighted items which produces exactly the same sample and estimate
distribution when applied to unit weights. The standard scheme for
unit items is, of course, much simpler: we include the $n$th item with
probability $n/k$, pushing out a uniformly random old one. The
estimate of any sampled item becomes $n/k$.  With $\varoptk$, when the
$n$th item arrives, we have $k$ old adjusted weights of size
$(n-1)/k$ and a new item of weight $1$. We apply the general 
$\varoptkbase$ to get down to $k$ weights. The result of this 
more convoluted procedure ends up the
same: the new item is included with probability $1/n$, and all
adjusted weights become $n/k$.

However, $\varoptk$ is not the only natural generalization of
the standard scheme for unit weights. The ppswor schemes
from \cite{CK:podc07,CK07,bottomk:VLDB2008} also produce
the same results when applied to unit weights. However,
ppswor and $\varoptk$ diverge when the weights are not all the
same. The ppswor scheme from \cite{bottomk:VLDB2008} does have exact
total ($\VS=0$), but suboptimal $\SV$ so it is not variance optimal.

Priority sampling is also a generalization in that it produces the same
sample distribution when applied to unit weights. However,
the estimates vary a bit, and that is why it only optimizes $\SV$ modulo
one extra sample. A bigger caveat is that priority sampling does not get 
the total exact as it has $\VS=\SV$.

The \varoptk scheme is the unique generalization of the standard
reservoir sampling scheme for unit weights to general weights that
preserves variance optimality.

\subsection{Distributed and parallel settings}
Contrasting the above specialization for streams, we note that the general recurrence is useful in, say, a distributed setting,
where the sets $I_x$ are at different locations and only local samples
$\varopt_{k_x}(I_x)$ are forwarded to the take part in the global
sample. Likewise, we can use the general recurrence for fast parallel computation, cutting a huge file $I$ into segments $I_x$ that we sample
from independently.

\section{The recurrence}\label{sec:recur}
\newcommand\tauxk[2]{\tau_{{#1},{#2}}}
We will now establish the recurrence \req{eq:gen-recurse} stating that
\[\varoptk(\bigcup_{x\in[m]}I_x)=\varoptk(\bigcup_{x\in[m]}\varopt_{k_x}(I_x))\]
Here $I_1,...,I_m$ are disjoint non-empty sets
of weighted items, and we have $k_x\geq k$ for each $x\in[m]$.

We want to show that if each subcall on the right hand side is a
$\varoptk$ scheme (with the $k_x$ replacing $k$ 
for the inner subcalls), that is, unbiased sampling
and estimation schemes satisfying (i)--(iii), then the resulting call is also a $\varoptk$ scheme. The hardest part is to prove (i), and we will do that
last.

Since an unbiased estimator of an unbiased estimator is an unbiased
estimator, it follows that \req{eq:gen-recurse} preserves unbiasedness.
For (ii) we just need to argue that the resulting sample is of size at most
$k$, and that follows trivially from (ii) on the outer subcall, regardless of
the inner subcalls.

Before proving (i) and (iii), we fix some notation.
Let $I=\bigcup_{x\in[m]}I_x$. We use $w_i$ to denote the original
weights. For each $x\in[m]$, set $I_x'=\varoptk(I_x)$, and use $w_i'$
for the resulting adjusted weights. Set
$I'=\bigcup_{x\in[m]}I_x'$. Finally, set $S=\varoptk(I')$ and use the
final adjusted weights as weight estimates $\hat w_i$.
Let $\tauxk{x}{k_x}$ be the threshold used in $\varoptk(I_x)$, and let
$\tau'_k$ be the threshold used by  $\varoptk(I')$.
\begin{lemma}\label{lem:non-pos} The recurrence \req{eq:gen-recurse}
preserves (iii).
\end{lemma}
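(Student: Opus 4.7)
The plan is to prove non-positive covariances by applying the law of total covariance, conditioning on the outcomes of the inner subcalls. Specifically, for any two distinct items $i,j \in I$, I would write
\[
\cov[\hat w_i, \hat w_j] = \E_{w'}\!\left[\cov[\hat w_i, \hat w_j \mid w']\right] + \cov_{w'}\!\left[\E[\hat w_i \mid w'],\ \E[\hat w_j \mid w']\right],
\]
where the outer expectation/covariance is over the random adjusted weights $w'$ produced by the inner subcalls. The task reduces to showing each of the two terms is non-positive.

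For the first (inner) term, the outer subcall is by hypothesis a $\varoptk$ scheme, so it satisfies property (iii) when applied to any concrete input $I'$ with weights $w'$. Hence $\cov[\hat w_i,\hat w_j \mid w'] \le 0$ pointwise, and so its expectation is $\le 0$. Here one should note the boundary case: if $i\notin I'$ (i.e.\ $i$ was not retained by its inner subcall) then $w'_i=0$, and by our convention such items have $\hat w_i=0$, so the conditional covariance is still well-defined and $\le 0$.

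For the second (outer) term, the conditional expectation $\E[\hat w_i \mid w']$ equals $w'_i$ by the unbiasedness of the outer $\varoptk$ call (treating $w'_i$ as the adjusted weight input, with $w'_i = 0$ for items not picked by the inner subcall). Thus the term equals $\cov[w'_i, w'_j]$. Now split into two cases: if $i$ and $j$ lie in the same $I_x$, then property (iii) of the inner subcall $\varopt_{k_x}(I_x)$ gives $\cov[w'_i, w'_j] \le 0$; if they lie in different $I_x, I_y$, then $w'_i$ and $w'_j$ are produced by independent subcalls, so $\cov[w'_i, w'_j] = 0$.

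Combining the two terms gives $\cov[\hat w_i, \hat w_j] \le 0$ for any distinct $i,j$, which is property (iii). I expect the entire proof to be short; the only subtle point is making sure the conditioning and the convention $\hat w_i = w'_i = 0$ for unsampled items are handled uniformly, so that the law of total covariance applies cleanly across the whole index set $I$ rather than just over sampled items.
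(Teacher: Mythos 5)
Your proof is correct and takes essentially the same route as the paper: condition on the outcome $(I',w')$ of the inner subcalls, use conditional property (iii) plus unbiasedness of the outer call, and use (iii) within each $I_x$ plus independence across blocks for the inner calls. The paper merely packages this as the product-moment chain $\E[\hat w_i\hat w_j]\leq \E[w'_iw'_j]\leq w_iw_j$ rather than via the law of total covariance, which is the same decomposition in different clothing.
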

\begin{proof}
With (iii) is satisfied for each inner subcall, we know that
there are no positive covariances in the adjusted weights $w'_i$ from 
$I_x'=\varoptk(I_x)$. Since these samples are independent, we get no positive
covariances in $w'_i$ of all items in $I'=\bigcup_{x\in[m]} I_x'$.
Let $(I^0,w^0)$ denote any possible concrete value of $(I',w')$.
Then
\begin{eqnarray*}
\lefteqn{\E[\hat w_i\hat w_j]}\\
&=&\sum_{(I^0,w^0)}\left(\Pr[(I',w')=(I^0,w^0)]\right.\\[-3ex]
&&\quad\quad\quad\quad\cdot \left.\E[\hat w_i\hat w_j\;|\;(I',w')=(I^0,w^0)]\right)\\
&\leq&\sum_{(I^0,w^0)}\left(\Pr[(I',w')=(I^0,w^0)]\;
w^0_iw^0_j\right)\\
&=&\E[w'_iw'_j]\ \leq\ w_iw_j.
\end{eqnarray*}
\end{proof}
To deal with (i), we need the following general consequence of (i) and (ii):
\begin{lemma}\label{lem:unique-multi} If (i) and (ii) is satisfied by a
scheme sampling $k$ out of $n$ items, then the multiset of adjusted
weight values in the sample is a unique function of the input weights.
\end{lemma}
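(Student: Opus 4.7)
The plan is to show that (i) pins down both the threshold and the identity of the items that must always be included, while (ii) combined with the expectation calculation from (i) forces the number of ``small'' items in the sample to be constant. Combining these three facts produces a unique multiset of adjusted weight values.

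First I would recall that by (i), the threshold $\tau_k$ is the unique real satisfying $\sum_{i\in[n]}\min\{1,w_i/\tau_k\}=k$ (or $\tau_k=0$ if $k\geq n$), so $\tau_k$ is a deterministic function of the input weights. Split the items into the ``large'' items $L=\{i:w_i\geq\tau_k\}$ with $p_i=1$ and the ``small'' items $[n]\setminus L$ with $p_i=w_i/\tau_k<1$. By (i), every $i\in L$ is included in the sample with probability $1$ and receives adjusted weight $\hat w_i=w_i$ (its Horvitz--Thompson value $w_i/p_i=w_i$). Every small item, if sampled, receives adjusted weight $\hat w_i=\tau_k$. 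Thus the only variability in the multiset comes from which small items are sampled, and each contributes the same value $\tau_k$.

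Next I would count. By (i), the expected sample size is $\sum_i p_i=|L|+\sum_{i\notin L}w_i/\tau_k=k$ (assuming the non-trivial case $k<n$; the case $k\geq n$ is immediate since then everyone is sampled with $\hat w_i=w_i$). Since all items of $L$ are always in the sample, the expected number of small items in the sample is exactly $k-|L|$. On the other hand, (ii) forces the total sample size to be at most $k$, so the number of sampled small items is at most $k-|L|$ almost surely. A nonnegative integer-valued random variable whose maximum equals its expectation must be constant, so exactly $k-|L|$ small items are sampled with probability $1$.

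Therefore, deterministically, the multiset of adjusted weight values equals
\[
\{w_i : i\in L\}\ \cup\ \underbrace{\{\tau_k,\tau_k,\ldots,\tau_k\}}_{k-|L|\text{ copies}},
\]
a function of the input weights alone. The main (minor) subtlety is the implicit appeal to the fact that a bounded-above random variable attaining its upper bound in expectation is a.s.\ constant; everything else is just unpacking the definitions of (i) and (ii).
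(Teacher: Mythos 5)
Your proof is correct and follows essentially the same route as the paper: fix the deterministic threshold $\tau_k$, note that items above it are always included with their own weight while sampled small items all get value $\tau_k$, and use the fact that (i) and (ii) force exactly $\min\{k,n\}$ samples. The only difference is that you re-derive that last fact via the expectation-equals-upper-bound argument, whereas the paper simply cites it as an earlier observation (it is exactly the same standard argument).
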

\begin{proof} If $k\geq n$, we include all weights, and the result
is trivial, so we may assume $k<n$. We already noted that
(i) and (ii) imply that exactly $k$ items are sampled. The threshold $\tau_k$ from \req{eq:thr}
is a function of the input weights. All items with higher weights
are included as is in the sample, and the remaining sampled items all get adjusted weight $\tau_k$. 
\end{proof}
In the rest of this section, {\em we assume that each inner subcall
  satisfies (i) and (ii), and that the outer subcall satisfies
  (i)}. Based on these assumptions, we will show that (i) is satisfied
by the resulting call.  
\begin{lemma}\label{lem:tau-fixed}
The threshold $\tau'_k$ of the outer subcall is unique.
\end{lemma}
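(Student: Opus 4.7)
The plan is to show that $\tau'_k$ depends only on the original weights $w_i$, not on the random choices made by the inner subcalls, so it is the same across every realization.

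First I would invoke Lemma~\ref{lem:unique-multi} on each inner subcall $\varopt_{k_x}(I_x)$. By assumption, each inner subcall satisfies (i) and (ii), so the multiset of adjusted weight values appearing in $I'_x$ is a deterministic function of $\{w_i : i\in I_x\}$. Concretely, if $k_x \geq |I_x|$ the multiset is just $\{w_i : i\in I_x\}$, and otherwise it consists of the original weights of the items with $w_i > \tauxk{x}{k_x}$ together with the appropriate number of copies of $\tauxk{x}{k_x}$. In particular, although \emph{which} items of $I_x$ land in $I'_x$ is random, the multiset of values $\{w'_i : i\in I'_x\}$ and the cardinality $|I'_x|=\min\{k_x,|I_x|\}$ are non-random.

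Next I would take the multiset-union over $x\in[m]$ to conclude that the multiset $M := \{\!\{w'_i : i\in I'\}\!\}$ and the cardinality $|I'|=\sum_x \min\{k_x,|I_x|\}$ are deterministic functions of the original weights. Since the outer subcall is a $\varoptk$ scheme, its threshold $\tau'_k$ is defined (using the threshold-centric description of (i)) as follows: if $k\geq |I'|$, then $\tau'_k=0$; otherwise $\tau'_k$ is the unique solution to
\[
\sum_{i\in I'}\min\{1,w'_i/\tau'_k\}=k,
\]
and the left-hand side depends on $I'$ only through the multiset $M$. In either case $\tau'_k$ is a function of $M$ alone, hence deterministic.

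There isn't really a hard obstacle here; the whole point is that Lemma~\ref{lem:unique-multi} pins down the multiset of adjusted weights produced by the inner subcalls, and the threshold of the outer subcall is by definition a function of that multiset. The only mild subtlety is keeping straight that randomness in \emph{identity} of sampled items is irrelevant—only the multiset of values matters for computing $\tau'_k$—and handling the boundary case $k\geq |I'|$ separately so the defining equation for $\tau'_k$ is well-posed.
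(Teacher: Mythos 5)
Your proposal is correct and follows the same route as the paper: invoke Lemma~\ref{lem:unique-multi} on each inner subcall to see that the multiset of adjusted weight values in $I'$ is a deterministic function of the input weights, and observe that $\tau'_k$ is determined by this multiset alone. Your added details (the explicit form of each multiset, the cardinality count, and the boundary case $k\geq|I'|$) are just elaborations of the paper's two-line argument.
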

\begin{proof} We apply Lemma \ref{lem:unique-multi} to all the inner subcalls,
and conclude that the multiset of adjusted weight values in $I'$ is unique. This
multiset uniquely determines $\tau'_k$.
\end{proof}
We now consider a simple
degenerate cases.
\begin{lemma}\label{lem:triv} 
The resulting call satisfies (i)  if $|I'|=\sum_{x\in[m]}|I_x'|\leq k$.
\end{lemma}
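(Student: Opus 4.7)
The plan is to show that when $|I'| \leq k$ the outer subcall does nothing nontrivial: since $\varoptk(I')$ satisfies (i) on an input of size at most $k$, its threshold $\tau'_k$ is forced to $0$, every item of $I'$ is retained with sampling probability $1$, and its adjusted weight is preserved. Hence $S = I'$ and $\hat w_i = w'_i$ for $i \in I'$. It then remains to verify that for the resulting call this coincides with the ipps prescription of $\varoptk$ applied directly to $I$.

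First I would apply Lemma \ref{lem:unique-multi} to each inner subcall to conclude $|I_x'| = \min\{k_x, |I_x|\}$ deterministically, so the hypothesis $|I'| = \sum_x |I_x'| \leq k$ is actually a constraint on the inputs rather than on random outcomes. Then I would split into two cases according to whether any inner subcall dropped items.

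Case A: $|I_x| \leq k_x$ for every $x$. Then each inner subcall has threshold $0$, every $w'_i = w_i$, and $I' = I$. The hypothesis $|I'| \leq k$ now reads $|I| \leq k$, so the ipps threshold for $\varoptk$ on $I$ is $\tau_k = 0$, and (i) demands precisely that every item be sampled with probability $1$ and estimate $w_i$ — exactly what we have.

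Case B: some $x$ has $|I_x| > k_x$, so $|I_x'| = k_x \geq k$. Combined with $|I'| \leq k$ and $|I_y'| \geq 1$ for every $y$ (from nonemptiness of $I_y$ and $k_y \geq k \geq 1$), this squeezes $m = 1$ and $k_1 = k$. The recurrence then collapses to $\varoptk(I) = \varoptk(\varopt_k(I))$ with the outer call trivial, so property (i) assumed for the inner subcall applied to $I$ with parameter $k$ is exactly property (i) for the resulting call.

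The main obstacle will be the bookkeeping in Case B: rigorously deriving $m = 1$ and $k_1 = k$ from $|I'| \leq k$, $k_y \geq k$, and $|I_y'| \geq 1$. Once this collapse is established, both cases reduce to an immediate verification, and the matching adjusted-weight values line up automatically because the inner subcalls' ipps prescriptions already agree with the one claimed for the resulting call.
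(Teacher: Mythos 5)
Your proposal is correct and follows essentially the same route as the paper: note that the outer subcall is trivial when its input has at most $k$ items, split into the case where no call does any active sampling (all of (i) is then immediate) and the case where some inner subcall drops items, and in the latter case use $|I'_x|=k_x\geq k$ together with $\sum_y|I'_y|\leq k$ and nonemptiness to force $m=1$ and $k_1=k$, so the resulting sample coincides with that of the single inner subcall, which satisfies (i) by assumption. The only cosmetic difference is that the paper phrases the case split as $|I|\leq k$ versus $|I|>k$, which under the hypothesis is equivalent to your split on whether some inner subcall actively samples.
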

\begin{proof}
If $|I|=\sum_{x\in[m]}|I_x|\leq k$, there is no active sampling by any
call, and then (i) is trivial.  Thus we may assume
$\sum_{x\in[m]}|I_x'|=\sum_{x\in[m]}\min\{k_x,|I_x|\} \leq
k<\sum_{x\in[m]}|I|$. This implies that $|I'_x|=k_x\geq k$ for some
$x$.  We conclude that $m=1$, $x=1$, and $k_1=k$, and this is
independent of random choices. The resulting sample is then is
identical to that of the single inner subcall on $I_1$ and we have
assumed that (i) holds for this call.
\end{proof}
In the rest of the proof, {\em we assume $|I'|>k$}. 
\begin{lemma}\label{lem:monotone} 
We have that $\tau'_k>\tauxk{x}{k_x}$ 
for each $x\in[m]$.
\end{lemma}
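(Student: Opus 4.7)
The plan is to prove the strict inequality $\tau'_k > \tauxk{x}{k_x}$ by a case split based on whether the inner subcall on $I_x$ is actively sampling or not, and in the active case to derive a contradiction from the assumed threshold equation $\sum_{i\in I'}\min\{1,w'_i/\tau'_k\}=k$.

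First, I would note that since we are under the running assumption $|I'|>k$, the outer threshold $\tau'_k$ is strictly positive. This immediately handles the easy case where $|I_x|\leq k_x$: by convention $\tauxk{x}{k_x}=0$ (the inner call includes everything as is), so $\tau'_k>0=\tauxk{x}{k_x}$.

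The substantive case is $|I_x|>k_x$. Here property (i) applied to the inner subcall gives $|I'_x|=k_x$ and, for each $i\in I'_x$, an adjusted weight $w'_i=\max\{w_i,\tauxk{x}{k_x}\}\geq\tauxk{x}{k_x}>0$. Now suppose, for contradiction, $\tau'_k\leq\tauxk{x}{k_x}$. Then for every $i\in I'_x$ we have $w'_i/\tau'_k\geq 1$, so $\min\{1,w'_i/\tau'_k\}=1$, and these items alone contribute $|I'_x|=k_x\geq k$ to the sum defining $\tau'_k$.

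To turn this into a strict inequality (and thus a contradiction with the defining equation $\sum_{i\in I'}\min\{1,w'_i/\tau'_k\}=k$), I split on whether $k_x>k$ or $k_x=k$. If $k_x>k$, the contribution from $I'_x$ already exceeds $k$. If $k_x=k$, then $|I'_x|=k$, but our running hypothesis $|I'|>k$ forces some other $I'_y$ with $y\neq x$ to be non-empty, and every such item contributes a strictly positive amount $\min\{1,w'_i/\tau'_k\}>0$. Either way the total sum strictly exceeds $k$, contradicting the threshold equation. Hence $\tau'_k>\tauxk{x}{k_x}$.

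The main obstacle is the bookkeeping for the boundary case $k_x=k$, where the contribution from $I'_x$ only gives $\geq k$ rather than $>k$; resolving this is exactly where the hypothesis $|I'|>k$ (which prevents the degenerate scenario already addressed by Lemma~\ref{lem:triv}) does the work by guaranteeing a second non-empty $I'_y$. All other steps are direct applications of property (i) for the inner subcalls (via Lemma~\ref{lem:unique-multi}) and the definition of the ipps threshold.
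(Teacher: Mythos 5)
Your proposal is correct and follows essentially the same route as the paper: use $|I'|>k$ to get $\tau'_k>0$, dispose of the case of a zero inner threshold, and otherwise observe that the $k_x\geq k$ items returned by the inner subcall all have adjusted weight at least $\tauxk{x}{k_x}$, so $\tau'_k\leq\tauxk{x}{k_x}$ would force $\sum_{i\in I'}\min\{1,w'_i/\tau'_k\}>k$, contradicting the outer threshold equation. The only difference is that you spell out the final strictness step (splitting $k_x>k$ versus $k_x=k$ and invoking a nonempty $I'_y$), which the paper leaves implicit in its appeal to (i) and \req{eq:thr}.
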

\begin{proof}
Since we have assumed $|I'|>k$, we have
$\tau_k'>0$. The statement is thus trivial for $x$ if $|I_x|\leq k$
implying $\tauxk{x}{k_x}=0$. However, if $|I_x|\geq k$, then from (i) and (ii) 
on the inner subcall $\varopt_{k_x}(I_x)$, we get 
that the returned $I'_x$ has exactly $k_x$ items, each 
of weight at least $\tauxk{x}{k_x}$. These items are all in $I'$.
Since $|I'|>k$,  it follows from (i) with \req{eq:thr} on
the outer subcall that $\tau_k'>\tauxk{x}{k_x}$.
\end{proof}
\begin{lemma}\label{lem:outvalues}
The resulting sample $S$ includes all $i$ with $w_i>\tau_k'$. 
Moreover, each $i\in S$ has $\hat w_i=\max\{w_i,\tau_k'\}$.
\end{lemma}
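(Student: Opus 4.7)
The plan is to chase adjusted weights through both levels of sampling and then combine with Lemma \ref{lem:monotone}. First, I would use the fact that each inner subcall satisfies (i) and (ii) to pin down the form of the intermediate adjusted weights: by the threshold‐centric formulation of ipps, for every $i \in I'_x$ we have $w'_i = \max\{w_i,\tauxk{x}{k_x}\}$, and any $i \in I_x$ with $w_i \geq \tauxk{x}{k_x}$ must lie in $I'_x$ (it is sampled with probability $1$).

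Next, for the first claim, I would take any $i \in I_x$ with $w_i > \tau'_k$. By Lemma \ref{lem:monotone} we have $\tau'_k > \tauxk{x}{k_x}$, hence $w_i > \tauxk{x}{k_x}$, so $i \in I'_x \subseteq I'$ with $w'_i = w_i > \tau'_k$. Applying (i) on the outer subcall to the input $I'$ with weights $w'$, any item of weight strictly above $\tau'_k$ is sampled with probability $1$ and keeps its adjusted weight. Hence $i \in S$ and $\hat w_i = w'_i = w_i = \max\{w_i,\tau'_k\}$.

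For the second claim, consider any $i \in S$. Then $i \in I'_x$ for some $x$, so $w'_i = \max\{w_i,\tauxk{x}{k_x}\}$. Applying (i) to the outer subcall to item $i \in I'$, the Horvitz--Thompson estimate satisfies $\hat w_i = \max\{w'_i,\tau'_k\}$. Substituting,
\[
\hat w_i \;=\; \max\bigl\{w_i,\tauxk{x}{k_x},\tau'_k\bigr\} \;=\; \max\{w_i,\tau'_k\},
\]
where the last equality uses Lemma \ref{lem:monotone}.

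There is no real obstacle here; the proof is just a careful two-level unfolding of the ipps threshold rule, where the only nontrivial ingredient is the monotonicity $\tau'_k > \tauxk{x}{k_x}$ already established in Lemma \ref{lem:monotone}, which lets the outer threshold absorb the inner ones.
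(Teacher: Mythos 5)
Your proof is correct and follows essentially the same route as the paper's: invoke Lemma \ref{lem:monotone} so the outer threshold dominates the inner ones, use (i) on the inner subcalls to see that items above $\tau_k'$ survive with unchanged adjusted weight $w_i'=w_i$, and then apply (i) on the outer subcall to get inclusion and the estimate $\max\{w_i,\tau_k'\}$. You merely spell out the $\max\bigl\{w_i,\tauxk{x}{k_x},\tau_k'\bigr\}$ bookkeeping that the paper leaves implicit, so there is nothing to add.
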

\begin{proof}
Since $\tau_k'>\tauxk{x}{k_x}$ and (i) holds for each inner subcall, 
it follows that $i\in I$ has $w'_i=w_i>\tau_k'$ if and only
if $w_i>\tau'_k$. The result now follows from (i) on the outer subcall.
\end{proof}
\begin{lemma}\label{lem:prob} 
The probability that $i\in S$ is $p_i=\min\{1,w_i/\tau_k'\}$.
\end{lemma}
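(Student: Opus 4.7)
The plan is to split into two cases based on the magnitude of $w_i$ relative to the fixed threshold $\tau'_k$ (which is deterministic by Lemma \ref{lem:tau-fixed}).

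First, if $w_i \geq \tau'_k$, then Lemma \ref{lem:outvalues} already guarantees $i \in S$ with probability $1$, matching $\min\{1, w_i/\tau'_k\} = 1$. So the substantive case is $w_i < \tau'_k$, where I need to show $\Pr[i \in S] = w_i/\tau'_k$.

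Let $x$ be the index with $i \in I_x$. I will argue that whenever $i$ survives the inner subcall (i.e., $i \in I'_x$), its adjusted weight $w'_i$ must satisfy $w'_i < \tau'_k$. Indeed, applying Lemma \ref{lem:unique-multi} to the inner subcall, $w'_i$ is deterministic conditional on $i \in I'_x$: either $w'_i = w_i$ (if $w_i \geq \tau_{x,k_x}$) or $w'_i = \tau_{x,k_x}$ (if $w_i < \tau_{x,k_x}$). In the first case $w'_i = w_i < \tau'_k$ by assumption, and in the second case $w'_i = \tau_{x,k_x} < \tau'_k$ by Lemma \ref{lem:monotone}. So in either subcase $\min\{1, w'_i/\tau'_k\} = w'_i/\tau'_k$ whenever $i \in I'$.

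Now I condition on the realization of $I'$ with its adjusted weights. Since the outer subcall is run with random choices independent of the inner subcalls and satisfies (i) with the deterministic threshold $\tau'_k$, we have $\Pr[i \in S \mid I'] = (w'_i/\tau'_k)\,\mathbb{1}[i \in I']$. Taking expectations and using the convention $w'_i = 0$ for $i \notin I'$,
\[
\Pr[i \in S] \;=\; \E\!\left[\frac{w'_i}{\tau'_k}\,\mathbb{1}[i \in I']\right] \;=\; \frac{\E[w'_i]}{\tau'_k} \;=\; \frac{w_i}{\tau'_k},
\]
where the last equality uses unbiasedness of the inner subcall. This matches $\min\{1, w_i/\tau'_k\}$ in this case, completing the proof.

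The main subtlety is ensuring that $\tau'_k$ can be treated as a constant when conditioning on $I'$; this is precisely what Lemma \ref{lem:tau-fixed} provides. The other delicate point is verifying the inequality $w'_i < \tau'_k$ uniformly over realizations of the inner subcall, which is where Lemma \ref{lem:monotone} combined with the deterministic two-valued nature of $w'_i$ (from Lemma \ref{lem:unique-multi}) becomes essential.
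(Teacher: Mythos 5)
Your proof is correct, but it is organized differently from the paper's. The paper's argument is a one-liner: by Lemmas \ref{lem:tau-fixed} and \ref{lem:outvalues}, the final estimate $\hat w_i$ is either $0$ or the \emph{fixed} value $\max\{w_i,\tau_k'\}$, and since the composed estimator is unbiased (an unbiased estimate of an unbiased estimate), $p_i\cdot\max\{w_i,\tau_k'\}=w_i$, i.e.\ $p_i=\min\{1,w_i/\tau_k'\}$. You instead condition on the outcome $I'$ of the inner stage: using property (i) of the outer subcall together with the deterministic threshold of Lemma \ref{lem:tau-fixed}, you write that $\Pr[i\in S\mid I']$ equals $w_i'/\tau_k'$ when $i\in I'$ and $0$ otherwise (after checking, via Lemma \ref{lem:monotone} and the two-valued form of $w_i'$, that $w_i'<\tau_k'$ whenever $w_i<\tau_k'$ and $i\in I'$), and then average using unbiasedness of the inner subcall, $\E[w_i']=w_i$. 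What this buys you is an argument that never invokes unbiasedness of the overall composition, only inner-stage unbiasedness (itself a consequence of inner (i)) and the outer ipps formula; the price is the case split and the extra monotonicity check, which the paper's route gets for free from Lemma \ref{lem:outvalues}. One small imprecision: in your first case you appeal to Lemma \ref{lem:outvalues} for all $w_i\geq\tau_k'$, but that lemma guarantees inclusion only under the strict inequality $w_i>\tau_k'$, so the boundary case $w_i=\tau_k'$ is not literally covered there. It is, however, covered by your second computation (then $w_i'=w_i=\tau_k'$, so $\min\{1,w_i'/\tau_k'\}=w_i'/\tau_k'$ still holds and the expectation argument yields $p_i=1$), so the fix is simply to move the boundary into that case.
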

\begin{proof}
From 
Lemma \ref{lem:tau-fixed} and \ref{lem:outvalues}  we get that
$\hat w_i$ equals the fixed value $\max\{w_i,\tau_k'\}$ if $i$ is
sampled. Since $\hat w_i$ is unbiased, we conclude that
$p_i=w_i/\max\{w_i,\tau_k'\}=\min\{1,w_i/\tau_k'\}$.
\end{proof}
\begin{lemma}\label{lem:good-thr} 
$\tau_k'$ is equal to the threshold 
$\tau_k$ defined directly for $I$ by (i).
\end{lemma}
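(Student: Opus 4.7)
The plan is to compute the expected sample size $\E[|S|]$ in two ways and invoke uniqueness of the threshold defining equation \req{eq:thr}.

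First I would observe that under the running assumption $|I'|>k$, the outer subcall draws exactly $k$ samples with probability one: property (ii) applied to the outer subcall gives $|S|\le k$, while (i) on the outer subcall combined with $|I'|>k$ forces $\tau'_k>0$ and an expected sample size of $k$ on $I'$, so $|S|=k$ almost surely. Hence $\E[|S|]=k$.

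Next I would compute $\E[|S|]$ by summing inclusion probabilities over \emph{all} $i\in I$. By Lemma \ref{lem:prob}, each $i\in I$ satisfies $\Pr[i\in S]=\min\{1,w_i/\tau'_k\}$; note that this formula is valid even for $i\in I\setminus I'$, since in that case $\hat w_i=0$ deterministically and unbiasedness of the overall estimator on the input $I$ forces $w_i=0$, which is excluded by our assumption that items have positive weight (so all $i\in I$ that matter do appear in $I'$ when they could be sampled, or are sampled through $I'$). Summing,
\[
k \;=\; \E[|S|] \;=\; \sum_{i\in I}\Pr[i\in S] \;=\; \sum_{i\in I}\min\{1,w_i/\tau'_k\}.
\]

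Finally, I would appeal to the uniqueness clause in the definition of $\tau_k$ from (i): $\tau_k$ is the unique positive value satisfying $\sum_{i\in I}\min\{1,w_i/\tau_k\}=k$ when $|I|>k$ (which holds here since $|I|\ge|I'|>k$). Therefore $\tau'_k=\tau_k$, as claimed. The only mild subtlety, and the step I would spend care on, is justifying that Lemma \ref{lem:prob}'s formula $\Pr[i\in S]=\min\{1,w_i/\tau'_k\}$ applies to every $i\in I$, not only to $i\in I'$; this follows because $\hat w_i$ is an unbiased estimator of $w_i$ and the only nonzero value $\hat w_i$ can take is $\max\{w_i,\tau'_k\}$ (by Lemma \ref{lem:outvalues}), forcing the claimed inclusion probability regardless of whether the path to $S$ goes through $I'$ or not.
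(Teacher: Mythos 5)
Your argument is essentially the paper's own proof: use (i) on the outer subcall (with $|I'|>k$) to get expected sample size $k$, equate this with $\sum_{i\in I}\Pr[i\in S]=\sum_{i\in I}\min\{1,w_i/\tau_k'\}$ via Lemma \ref{lem:prob}, and invoke the uniqueness of $\tau_k$ in \req{eq:thr}. Two small remarks: your appeal to (ii) for the outer subcall is not available under the section's standing assumptions (only (i) is assumed for the outer subcall) and is also unnecessary, since (i) alone already gives $\E[|S|]=k$; and the parenthetical suggestion that $i\in I\setminus I'$ forces $w_i=0$ is off the mark, but harmless, because Lemma \ref{lem:prob} (via unbiasedness over the whole two-stage process together with Lemma \ref{lem:outvalues}) already yields $\Pr[i\in S]=\min\{1,w_i/\tau_k'\}$ for every $i\in I$, exactly as your closing sentence observes.
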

\begin{proof} Since the input $I'$ to the outer subcall is more than $k$ items
and the call satisfies (i), it returns an expected number of $k$ items and
these form the final sample $S$. With $p_i$ the probability that item $i$ is 
included in $S$, we conclude that $\sum p_i=k$.
Hence by Lemma \ref{lem:prob}, we have $\sum_i \min\{1,w_i/\tau_k'\}=k$.
However, (i) defines $\tau_k$ as the unique value such that
$\sum_i \min\{1,w_i/\tau_k\}=k$, so we conclude that $\tau_k'=\tau_k$.
\end{proof}
From Lemma \ref{lem:outvalues}, \ref{lem:prob}, and \ref{lem:good-thr},
we conclude 
\begin{lemma}\label{lem:i} 
If (i) and (ii) are satisfied for each inner subcall and (i)
is satisfied by the outer subcall, then (i) is satisfied by the
resulting call.
\end{lemma}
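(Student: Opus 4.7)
My plan is to reduce the statement to a combination of the auxiliary lemmas already established. Property (i) has two parts: the sampling probability formula $p_i = \min\{1, w_i/\tau_k\}$ with $\tau_k$ characterized by $\sum_i \min\{1, w_i/\tau_k\} = k$, and the Horvitz--Thompson estimate $\hat w_i = \max\{w_i, \tau_k\}$ for every $i \in S$. I will verify each, keying everything to the outer-subcall threshold $\tau'_k$ and then identifying $\tau'_k$ with $\tau_k$.

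First, I would dispatch the degenerate regime $|I'| \le k$ via Lemma \ref{lem:triv}, which already handles the case that no inner or outer subcall performs active sampling (or reduces to a single inner subcall with $k_1 = k$). From then on I may assume $|I'| > k$, which is the hypothesis used throughout Lemmas \ref{lem:monotone}--\ref{lem:good-thr}. In this regime the outer threshold $\tau'_k$ is a deterministic function of the original weights (Lemma \ref{lem:tau-fixed}), since the multiset of adjusted weights in $I'$ is deterministic by Lemma \ref{lem:unique-multi} applied to each inner subcall.

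Next, I would appeal to Lemma \ref{lem:monotone} ($\tau'_k > \tau_{x,k_x}$ for every $x$) to argue that inner-subcall adjusted weights that exceed $\tau'_k$ are in fact the original weights $w_i$, not rounded-up threshold values. This is precisely what allows Lemma \ref{lem:outvalues} to conclude $\hat w_i = \max\{w_i, \tau'_k\}$ for every $i \in S$, thereby nailing down the estimator part of property (i) once $\tau'_k$ is known. Lemma \ref{lem:prob} then converts the deterministic estimator formula, together with unbiasedness, into $p_i = \min\{1, w_i/\tau'_k\}$.

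Finally, I would close the loop with Lemma \ref{lem:good-thr}: because the outer subcall satisfies (i) and its input has more than $k$ items, its expected sample size is $k$, so $\sum_i p_i = k$, which by uniqueness of the threshold forces $\tau'_k = \tau_k$. Substituting $\tau_k$ for $\tau'_k$ in the formulas for $p_i$ and $\hat w_i$ yields exactly the statement of (i) for the resulting call. The main conceptual obstacle was the one already overcome in Lemma \ref{lem:tau-fixed}: a priori $\tau'_k$ depends on random adjusted weights, and without its determinism one could not cleanly match the inclusion probabilities to the fixed ipps threshold demanded by (i); everything else is mechanical bookkeeping.
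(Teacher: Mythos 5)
Your proposal is correct and follows essentially the same route as the paper: the paper also handles the degenerate case via Lemma \ref{lem:triv}, fixes the threshold via Lemmas \ref{lem:unique-multi} and \ref{lem:tau-fixed}, and then concludes Lemma \ref{lem:i} directly from Lemmas \ref{lem:outvalues}, \ref{lem:prob}, and \ref{lem:good-thr}, exactly as you outline. No gaps.
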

We have now shown that the sample $S$ we generate satisfies (i), (ii),
and (iii), hence that it is a $\varoptk$ sample. Thus
\req{eq:gen-recurse} follows.

\section{Efficient implementations}\label{sec:efficient}
We will now show how to implement $\varoptkbase$. First we give a basic
implementation equivalent to the one used in \cite{Cha82,Til96}. Later
we will tune our implementation for use on a stream.

The input is a set
$I$ of $n=k+1$ items $i$ with adjusted weights $\tilde w_i$. We want a
$\varoptk$ sample of $I$.  First we compute the threshold $\tau_k$
such that $\sum_{i\in[n]}\min\{1,\tilde w_i/\tau_k\}=k$. We want to
include $i$ with probability $p_i=\min\{1,\tilde w_i/\tau_k\}$, or
equivalently, to drop $i$ with probability $q_i=1-p_i$.  Here
$\sum_{i\in I}q_i=n-k=1$. We partition the unit interval $[0,1]$ into
a segment of size $q_i$ for each $i$ with $q_i>0$. Finally, we pick a
random point $r\in [0,1]$.  This hits the interval of some $d\in I$,
and then we drop $d$, setting $S=I\setminus\{d\}$.  For each $i\in S$
with $\tilde w_i<\tau_k$, we set $\tilde w_i=\tau_k$. Finally we
return $S$ with these adjusted weights.

\begin{lemma}\label{lem:base}
\varoptkbase\ is a \varoptk\ scheme.
\end{lemma}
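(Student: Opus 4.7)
The plan is to verify the three defining properties (i), (ii), (iii) of a $\varoptk$ scheme directly from the construction, with inputs of size $n=k+1$. Property (ii) is immediate since the procedure drops exactly one item (the one whose interval in the partition of $[0,1]$ contains the random point $r$), leaving $|S|=k$.

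For property (i), I would observe that by construction of the partition of $[0,1]$ into segments of length $q_i$, the probability that item $i$ is dropped is exactly $q_i$, so its inclusion probability is $p_i = 1-q_i = \min\{1,\tilde w_i/\tau_k\}$, which matches the ipps definition with threshold $\tau_k$. The assigned estimate $\max\{\tilde w_i,\tau_k\} = \tilde w_i/p_i$ is the Horvitz--Thompson value, hence unbiased. The threshold $\tau_k$ used in the scheme is the same one that the definition of (i) prescribes on $I$.

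For property (iii), I would fix a pair $i\neq j$ and split into cases. If $\tilde w_i \geq \tau_k$, then item $i$ is retained with probability one and has constant adjusted weight $\tilde w_i$, so $\cov(\hat w_i,\hat w_j)=0$ trivially; symmetrically for $j$. The only interesting case is when both $\tilde w_i,\tilde w_j < \tau_k$, so both adjusted weights take only the values $\tau_k$ (if kept) or $0$ (if dropped). Since the drop intervals of $i$ and $j$ are disjoint subsets of $[0,1]$, both items survive with probability $1-q_i-q_j$, which is strictly larger than the Poisson value $(1-q_i)(1-q_j)$ that would give zero covariance. A one-line calculation then yields $\cov(\hat w_i,\hat w_j) = \tau_k^2(1-q_i-q_j) - \tau_k^2(1-q_i)(1-q_j) = -\tau_k^2 q_i q_j \leq 0$.

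There is no real obstacle here; the proof is a routine check. The only conceptual point worth emphasizing is the role played by packing the drop events into \emph{disjoint} segments of $[0,1]$: this guarantees that at most one item is ever dropped, which is precisely what converts the would-be zero covariance of independent (Poisson) dropping into the strictly non-positive value needed for (iii).
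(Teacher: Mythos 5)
Your proof is correct and takes essentially the same route as the paper's: (ii) is immediate since exactly one item is dropped, (i) follows from the construction of the drop probabilities together with the Horvitz--Thompson estimates, and (iii) rests on the negative dependence among below-threshold items, which the paper argues qualitatively and you make explicit via $\cov(\hat w_i,\hat w_j)=-\tau_k^2 q_i q_j\le 0$. One wording slip only: $1-q_i-q_j$ is strictly \emph{smaller} than the Poisson value $(1-q_i)(1-q_j)$, not larger --- your displayed covariance computation already has the correct sign, so nothing else is affected.
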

\begin{proof} It follows directly from the definition that we
use threshold probabilities and estimators, so (i) is satisfied.
Since we drop one, we end up with exactly $k$ so (ii) follows.
Finally, we need to argue that there are no positive
covariances. We could only have positive covariances between
items below the threshold whose inclusion probability is below $1$. 
Knowing that one such item is
included can only decrease the chance that another is included. Since
the always get the same estimate $\tau_k$ if included, we conclude
that the covariance between these items is negative. This settles (iii).
\end{proof}

\subsection{An $O(\log k)$ implementation}\label{sec:worst-case}
We will now improve $\varoptkbase$ to handle each new
item in $O(\log k)$ time. Instead of starting from scratch,
we want to maintain a reservoir with a sample $R$ of size $k$ for the 
items seen thus far. We denote by $R_j$ the a reservoir after processing
item $j$.

In the next subsection, we will show how to process each item in
$O(1)$ expected amortized time if the input stream is randomly
permuted.

Consider round $j>k$. Our first goal  is to identify the 
new threshold $\tau=\tau_{k,j}>\tau_{k,j-1}$.  
Then we  subsample $k$ out of
 the $k+1$ items in $R^{\mathrm{pre}}_j=R_{j-1}\cup\{j\}$. Let $\tilde
w_{(1)},...,\tilde w_{(k+1)}$ be the adjusted  weights of
the items in $R^{\mathrm{pre}}_j$ in increasing sorted order,
breaking ties arbitrarily.  
We first identify the largest number $t$ such that $\tilde w_{(t)}\leq
\tau$.  Here 
{\small 
\begin{eqnarray}
\tilde w_{(t)}\leq \tau & \iff & k+1-t+(\sum_{x\leq t}\tilde
w_{(x)} )/\tilde w_{(t)}\geq k \nonumber \\
 & \iff & (\sum_{x\leq t}\tilde
w_{(x)} )/\tilde w_{(t)}\geq t-1 \ .\label{eq:t}
\end{eqnarray}
}
After finding $t$ we 
find $\tau$ as the solution to
{\small
\begin{equation}\label{eq:tau}
(\sum_{x\leq t}\tilde w_{(x)})/\tau=t-1\iff \tau=
(\sum_{x\leq t}\tilde w_{(x)})/(t-1) \ .
\end{equation}
}
To find the item to leave out,
we pick a uniformly random number $r\in(0,1)$,
and find the smallest $d\leq t$ such that
{\small
\begin{equation}\label{eq:d}
\sum_{x\leq d}(1-\tilde w_{(x)}/\tau)\geq r\iff d\tau-\sum_{x\leq d}\tilde 
w_{(x)}\geq 
r\tau \ .
\end{equation}
}
Then  the $d$th smallest item in $R^{\mathrm{pre}}_j$, is the one we
drop to create the sample $S=R_j$.

The equations above suggests that we  find $t$, $\tau$, and $d$ by a
 binary search. When we consider an item during this search
 we  need to know 
the number of items of smaller adjusted weight, and their
total adjusted weight.

To perform this binary search we represent
$R_{j-1}$ divided into two sets. The
set $L$ of large items with $w_i> \tau_{k,j-1}$ and $\tilde w_i=w_i$,
and the set $T=R_{j-1}\setminus L$ of small items whose 
adjusted weight is equal to the threshold $\tau_{k,j-1}$.
We represent $L$ in sorted order by a
balanced  binary search tree.
Each node in this tree stores the number of items in
its subtree and their total  weight. 
We represent $T$ in sorted order (here in fact the order could
be arbitrary) by a balanced binary search tree,
where each node in this tree stores the number of items in
its subtree. If we multiply the number of items in a subtree
of $T$ by $\tau_{k,j-1}$ we get their total adjusted weight. 

The height of each of these two trees is $O(\log k)$
so  we can insert or delete an element,
or concatenate or split a list in $O(\log k)$ time 
\cite{CLRS01}. Furthermore, if
we follow a path down from the root of one of these trees to
a node $v$, then 
by accumulating counters from roots of subtrees 
hanging to the left of the path, and smaller nodes
on the path,
we can maintain the number of items in the tree
 smaller than the
one at $v$, and the total adjusted weight of these items.

We process item $j$ as follows.
 If  item $j$ is large, that is $w_{j}>\tau_{k,j-1}$, we
insert it into the tree representing $L$.
Then we find $t$ 
by searching the tree over
$L$
as follows. While at a node $v$ we
compute the total 
number of items smaller than the one at $v$ by
adding to the number of such items in $L$, $|T|$
or $|T| + 1$ depending upon whether  $w_{j}\le \tau_{k,j-1}$ or not.
Similarly, we compute the total adjusted weight of items 
smaller than the one at $v$ by
adding  $|T|\tau_{k,j-1}$ to the
total weight of such items $L$,
and  $w_{j}$ if  $w_{j}\le \tau_{k,j-1}$.
Then we use Equation \req{eq:t} to decide if
$t$ is the index of the item at $v$, or
we should proceed to the left or to the right child of $v$.
After computing $t$ we compute 
 $\tau$ by Equation \req{eq:tau}. Next
we identify $d$ by first considering
item  $j$ if $w_{j}<\tau_{k,j-1}$, and then
searching either the tree over $T$ or the
tree over $L$ in a way similar to the search for
computing $t$
but using Equation (\ref{eq:d}). Once finding $d$ our subsample becomes
$R_j=S=R^{\mathrm{pre}}_j\setminus\{d\}$. All this takes $O(\log k)$.

Last we  update our representation of
 the  reservoir, so that it corresponds to $R_j$ and $\tau_{k,j}$.
We insert $w_j$ into $T$ if $w_{j}\le \tau_{k,j-1}$ (otherwise
it had already been inserted into $L$). We also delete $d$ from
the list containing it.
If $w_{(t)}$ was a large weight
 we  split $L$ at $w_{(t)}$ and concatenate the prefix of $L$ to $T$.
Our balanced trees support concatenation and split in $O(\log k)$
time, so this does not affect our overall time bounds. Thus
we have proved the following theorem.
\begin{theorem}
With the above implementation, our reservoir sampling
algorithm processes each new item in $O(\log k)$ time.
\end{theorem}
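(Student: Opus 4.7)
The plan is to verify that the processing of each new item $j$ consists of a constant number of operations on the two balanced binary search trees representing $L$ and $T$, and that each such operation runs in $O(\log k)$ worst-case time.

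First I would record the invariant that at the beginning of round $j$ the trees $L$ and $T$ together store the $k$ items of $R_{j-1}$, so each tree has size at most $k$ and therefore height $O(\log k)$; all the standard order-statistic-tree operations (insert, delete, split, concatenate) then run in $O(\log k)$ time on the augmented trees described, see \cite{CLRS01}. The augmentation (subtree size at every node of $T$; subtree size and subtree total weight at every node of $L$) can be maintained by the usual local updates during rotations, so it does not degrade the $O(\log k)$ bound of any single structural operation.

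Next I would examine the three informational tasks performed before any structural change: locating $t$, computing $\tau$, and locating $d$. The key observation is that each of the two searches descends a single root-to-leaf path in one of the trees, and at each visited node $v$ the counts and partial sums ``number of items in $R^{\mathrm{pre}}_j$ smaller than $v$'' and ``total adjusted weight of those items'' can be updated in $O(1)$ time from the augmented subtree statistics, with the contribution of the other tree folded in via $|T|$, $|T|\tau_{k,j-1}$, and the appropriate adjustment for the new item $j$ (whether $j$ sits in $L$ or $T$). The test in \req{eq:t} for deciding whether to descend left or right, and the analogous test in \req{eq:d}, are each $O(1)$ arithmetic comparisons, so every visited node costs $O(1)$, yielding $O(\log k)$ per search. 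The computation of $\tau$ from \req{eq:tau} is then a single division.

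Finally, the structural cleanup consists of: inserting $j$ into either $L$ or $T$, deleting the dropped item $d$, and, if $\tilde w_{(t)}$ was a large weight, splitting $L$ at $\tilde w_{(t)}$ and concatenating the resulting prefix onto $T$ (since those items are now at or below the new threshold $\tau_{k,j}$). Each of these is one standard balanced-tree operation taking $O(\log k)$ time. Since a bounded number of $O(\log k)$ operations are performed per arriving item, the total cost per item is $O(\log k)$, establishing the theorem. The only subtle step is the bookkeeping in the searches when item $j$ is treated as a ``virtual'' element of $R^{\mathrm{pre}}_j$ before it is physically inserted into $T$, but this is handled by the explicit adjustment terms described above.
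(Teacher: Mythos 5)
Your proposal is correct and follows essentially the same route as the paper: the paper's argument is precisely the implementation description (augmented balanced trees for $L$ and $T$ of size at most $k$, root-to-leaf searches for $t$ and $d$ with $O(1)$ work per node using the subtree counts/weights and the $|T|$, $|T|\tau_{k,j-1}$ adjustments, then a constant number of insert/delete/split/concatenate operations), each piece costing $O(\log k)$. Your added remarks about maintaining the augmentation under rotations and treating item $j$ as a virtual element before insertion are just explicit statements of details the paper leaves implicit.
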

In the above implementation we have assumed constant time access to real
numbers including the random $r\in (0,1)$. Real computers do not support
real reals, so in practice we would suggest using floating point
numbers with some precision $\wp\gg \log n$, accepting a fractional
error of order $1/2^\wp$.

We shall later study an alternative implementation based on a standard priority
queue, but it is only more efficient in the amortized/average sense. Using
the integer/floating point priority queue from \cite{Tho07}, it
handles any $k$ consecutive items in $O(k\log\log k)$ time, hence using
only $O(\log\log k)$ time on the average per item.

\subsection{Faster on randomly permuted streams}\label{sec:amort}
We will now discuss some faster implementations in amortized and
randomized settings. First we consider the case where the input
stream is viewed as randomly permuted.

We call the processing of a new item {\em simple\/} if it is not
selected for the reservoir and if the threshold does
not increase above any of the previous large weights. We will argue
that the simple case is dominating if $n\gg k$ and the input
stream is a random permutation of the weights. Later we get a substantial speed-up by reducing 
the processing time of the simple case to a constant.

Lemma \ref{lem:prob} implies that our reservoir sampling scheme
satisfies the condition of the following simple lemma:
\begin{lemma}\label{lem:new-include} 
Consider a reservoir sampling scheme with capacity $k$ such that when any 
stream prefix $I$ has passed by, the probability that $i\in I$ is in the 
current reservoir is independent of the order of $I$. If a stream of $n$
items is randomly permuted, then the expected number of times that
the newest item is included in the reservoir is bounded by 
$k(\ln (n/k)+O(1))$.
\end{lemma}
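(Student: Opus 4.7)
The plan is to let $X_j$ denote the indicator that the $j$th (newest) item at step $j$ is contained in the current reservoir $R_j$, and to bound $\sum_{j=1}^n \Ep{X_j}$. The trivial contribution from small $j$ is at most $k$ since $\Ep{X_j}\leq 1$; the work is to show $\Ep{X_j}\leq k/j$ for $j>k$, after which summing gives $k+\sum_{j=k+1}^n k/j\leq k(\ln(n/k)+O(1))$ via the harmonic estimate $\sum_{j=k+1}^n 1/j\leq \ln(n/k)+O(1)$.

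For the key bound when $j>k$, I will use the two assumptions together. First, condition on the unordered set $S$ of the first $j$ items of the permuted stream. Because the stream is a uniformly random permutation of the $n$ items, the identity of the $j$th arrival, given $S$, is uniformly distributed over $S$. Second, by the order-independence assumption, for each $i\in S$ there is a well-defined probability $q_i(S)=\Prp{i\in R_j\mid S}$ that does not depend on the internal order of $S$. Summing these marginals gives $\sum_{i\in S}q_i(S)=\Ep{|R_j|\mid S}\leq k$, because the reservoir has capacity $k$. Averaging uniformly over which element of $S$ is the $j$th arrival yields
\[\Ep{X_j\mid S}\;=\;\frac1j\sum_{i\in S}q_i(S)\;\leq\;\frac kj,\]
and taking expectation over $S$ gives $\Ep{X_j}\leq k/j$.

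Finally, I will sum the two regimes: $\sum_{j=1}^k \Ep{X_j}\leq k$, and $\sum_{j=k+1}^n \Ep{X_j}\leq k\sum_{j=k+1}^n 1/j\leq k\ln(n/k)$, which combine to the claimed bound $k(\ln(n/k)+O(1))$.

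The only slightly delicate point is the symmetry argument used to conclude $\Ep{X_j\mid S}=\frac1j\sum_{i\in S}q_i(S)$. It relies crucially on both hypotheses (random permutation of the stream and order-independence of the inclusion probabilities of the scheme), and neither can be dropped; for $\varoptk$ the order-independence follows from Lemma~\ref{lem:prob}, which fixes the per-item inclusion probability $p_i=\min\{1,w_i/\tau_k'\}$ as a function only of the multiset of weights seen so far.
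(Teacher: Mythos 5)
Your proof is correct and follows essentially the same route as the paper: both arguments observe that, conditioned on the (unordered) prefix, the marginal inclusion probabilities sum to $\E[|R_j|]\leq k$, use the random permutation to identify $\Ep{X_j}$ with the average marginal probability $\leq k/j$, and then sum the harmonic series. Your version merely makes the conditioning on the unordered set $S$ and the symmetry step explicit, which the paper states more informally.
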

\begin{proof}
Consider any prefix $I$ of the stream. The average probability that an 
item $i\in I$  is in the reservoir $R$ is $|R|/|I|\leq k/|I|$. 
If $I$ is randomly 
permuted, then  this is the expected  probability that the last item of $I$
is in $R$. By linearity of 
expectation, we get that the expected number of times the
newest item is included in $R$ is bounded by $k+\sum_{j=k+1}^n k/j=
k(1+H_n-H_{k+1})=k(\ln (n/k)+O(1))$.
\end{proof}
As an easy consequence, we get
\begin{lemma}\label{lem:pass-threshold} 
When we apply our reservoir sampling algorithm
 to a randomly permuted stream, the expected number of times that the threshold
passes a weight in the reservoir is bounded by $k(\ln (n/k)+O(1))$.
\end{lemma}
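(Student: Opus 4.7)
The plan is to exploit the monotonicity of the threshold (Lemma~\ref{lem:monotone}, applied to the recurrence $\varoptk(R_{j-1}\cup\{j\})$) so that a given item in the reservoir can be ``passed'' by the growing threshold at most once, and then to charge the total count of passing events to the number of reservoir insertions, which is already bounded by Lemma~\ref{lem:new-include}.

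First, I would observe that for $j>k$ we have $\tau_{k,j}\geq \tau_{k,j-1}$: this follows directly from Lemma~\ref{lem:monotone} applied with $m=2$, $I_1=[j-1]$, $I_2=\{j\}$, $k_1=k_2=k$. Hence the stream of thresholds $\tau_{k,k+1},\tau_{k,k+2},\ldots$ is non-decreasing. A ``passing event'' is a pair $(i,j)$ where $i$ lies in the reservoir $R_{j-1}$, has original weight $w_i>\tau_{k,j-1}$, and satisfies $w_i\leq\tau_{k,j}$.

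Second, I would argue that each item $i$ can be involved in at most one passing event. Indeed, once $w_i\leq\tau_{k,j}$, monotonicity of $\tau_{k,\cdot}$ gives $w_i\leq\tau_{k,j'}$ for every $j'\geq j$, so in every subsequent round either $i$ is dropped from the reservoir (and never reconsidered, by the reservoir-sampling convention) or it persists with adjusted weight equal to the current threshold; in neither case can a later threshold ``pass'' it.

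Therefore the total number of passing events is bounded by the number of distinct items that ever reside in the reservoir. That quantity equals the $k$ initial items that fill the reservoir plus the number of rounds $j>k$ in which the newest item is selected into the reservoir. By Lemma~\ref{lem:new-include} applied to our scheme (whose inclusion probabilities depend only on the set of weights seen, not their order, by Lemma~\ref{lem:prob} and Lemma~\ref{lem:good-thr}), the expected value of this total is at most $k(\ln(n/k)+O(1))$, giving the claimed bound. The only subtlety I anticipate is verifying the order-invariance hypothesis of Lemma~\ref{lem:new-include}, but this is an immediate consequence of the fact that $\tau_{k,j}$ depends only on the multiset of weights in the prefix (via~\eqref{eq:thr}) and that $\Pr[i\in R]=\min\{1,w_i/\tau_{k,j}\}$ by Lemma~\ref{lem:prob}.
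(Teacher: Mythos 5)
Your proof is correct and follows essentially the same route as the paper: use monotonicity of the threshold to argue each reservoir weight can be passed at most once, then charge passing events to the items ever entering the reservoir and invoke Lemma~\ref{lem:new-include} (whose order-invariance hypothesis holds because, by Lemma~\ref{lem:prob}, the inclusion probability $\min\{1,w_i/\tau_k\}$ depends only on the multiset of weights in the prefix). Your write-up simply makes explicit a few details the paper leaves implicit.
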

\begin{proof} Since the threshold is increasing, a weight in the
reservoir can only be passed once, and we know from Lemma \ref{lem:new-include} that the expected number of weights ever entering the reservoir is
bounded by $k(\ln (n/k)+O(1))$.
\end{proof}
We  now show how to perform a simple case in  constant time.
To do so, we maintain the smallest of the large weights in
the reservoir in a variable $w_\ell$.

We now start the processing of item $j$, hoping for it to be a simple
case. We assume we know the cardinality of the set $T$ of small items
in $R_{j-1}$ whose adjusted weight is the threshold 
$\tau=\tau_{k,j-1}$. 
Tentatively as in  \req{eq:tau} we compute 
\[\tau= (w_j+|T|\tau_{k,j-1})/|T|.\]
If $w_j\geq \tau$ or $\tau\geq w_\ell$, we cannot be in the simple case,
so we revert to the original
implementation. Otherwise, $\tau$ has its correct new value
$\tau_{k,j}$, and then we proceed to generate
the random number $r\in(0,1)$ from the original algorithm. If 
\[(\tau-w_j)> r\tau\textnormal,\]
we would include the new item, so we revert to the original algorithm
using this value of $r$. Otherwise, we skip item $j$. No further processing 
is required, so we are done in constant time. 
The reservoir and its division into large items in $L$ and
small items in $T$ is unchanged. However, all the 
adjusted weights in $T$ were increased implicitly when
we increased $\tau$ from $\tau_{k,j-1}$ to $\tau_{k,j}$.

\begin{theorem}\label{thm:ran-worst} A randomly permuted stream of length $n$ is
processed in $O(n+k(\log k)(\log n))$ time.
\end{theorem}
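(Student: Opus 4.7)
The plan is to split the per-item cost into a constant-time optimistic test and a fall-back to the $O(\log k)$ worst-case implementation from Section~\ref{sec:worst-case}, and then bound the expected number of fall-backs.

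First I would observe that the simple-case procedure described just above the theorem always runs in $O(1)$ time: it tentatively computes $\tau=(w_j+|T|\tau_{k,j-1})/|T|$, checks $w_j<\tau$ and $\tau<w_\ell$, draws $r\in(0,1)$, and checks $(\tau-w_j)\le r\tau$. If all three checks succeed, the reservoir and its split into $L$ and $T$ are unchanged and $\tau_{k,j}=\tau$ is stored in a single variable, so the state is consistent and no further work is needed. Summed over all $n$ items, this contributes $O(n)$ to the running time.

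Next I would show that whenever we fall back to the $O(\log k)$ implementation, one of two events has occurred: (a) item $j$ is included in the reservoir $R_j$, or (b) the updated threshold $\tau_{k,j}$ has risen above some weight that was large in $R_{j-1}$ (equivalently, some item moves from $L$ to $T$, or the $\tau<w_\ell$ test fails). The check $\tau\geq w_\ell$ catches exactly event (b), and the $r$-check combined with $w_j\geq\tau$ catches event (a); failure of any check forces the algorithm to revert. Thus the number of fall-backs is at most (number of inclusions of new items) $+$ (number of threshold-crossings of large weights). By Lemma~\ref{lem:new-include} the first quantity has expectation $O(k\log(n/k))$, and by Lemma~\ref{lem:pass-threshold} so does the second, giving $O(k\log(n/k))$ expected fall-backs.

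Each fall-back costs $O(\log k)$ by the implementation in Section~\ref{sec:worst-case}, so the total expected cost of fall-backs is $O(k\log k\log(n/k))=O(k\log k\log n)$. Adding the $O(n)$ base cost yields the claimed $O(n+k(\log k)(\log n))$ bound. The only subtle step is checking that the two optimistic tests ($\tau<w_\ell$ and the $r$-test) are a \emph{sound} characterization of the simple case, so that events (a) and (b) are the only ways the fast path can fail; this will be the main item to verify carefully, but everything else is just bookkeeping against the two lemmas already established.
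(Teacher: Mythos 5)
Your proposal is correct and follows essentially the same route as the paper: constant time for the simple cases giving the $O(n)$ term, and Lemmas~\ref{lem:new-include} and~\ref{lem:pass-threshold} bounding the expected number of non-simple (fall-back) cases by $O(k\log(n/k))$, each handled in $O(\log k)$ time by the balanced-tree implementation. The soundness check you flag (that a failed fast-path test implies either an inclusion or a threshold passing a large weight) is exactly how the paper defines and uses the simple case, so nothing further is needed.
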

\begin{proof}
We spend only constant time in the simple cases. 
From Lemma \ref{lem:new-include} and \ref{lem:pass-threshold} we
get that the expected number of non-simple cases is at most 
$2k(\ln (n/k)+O(1))=O(k(\log (n/k))$, and we spend only $O(\log k)$ time
in these cases.
\end{proof}

\subsection{Simpler and faster amortized implementation}\label{sec:amortized}
We will present a simpler implementation of $\varoptk$ based on a
standard priority queue. This version will also handle the above
simple cases in constant time.
From a worst-case perspective, the amortized version will not be as good 
because we may spend $O(k\log\log k)$ time on processing a single
item, but on the other hand, it is guaranteed to process
any sequence of $k$ items within this time bound. Thus the
amortized/average time per item is only $O(\log\log k)$, which is
exponentially better than the previous $O(\log k)$ worst-case 
bound.

\begin{small}
\begin{algorithm}
     \caption{$\varopt_k$. The set of items in the reservoir 
     are represented as
     $R=L\cup T$. If $i\in L$, we have 
     $\hat w_i=w_i>\tau$. For all $i\in T$, we have $\hat w_i=\tau$.
     The set $L$ is in a priority queue maintaining the item of minimum
     weight. The set $T$ is in an array.}
     \label{alg:varopt}
     \SetVline \dontprintsemicolon \BlankLine
     \BlankLine
     $L\assign\emptyset$; $T\assign\emptyset$; $\tau\assign 0$\\
     \While{$|L|<k$}{include each new item $i$ in $L$ with
            its weight $w_i$ as adjusted weight $\hat w_i$.}
     \While{new item $i$ with weight $w_i$ arrives}{
       $X\assign \emptyset$ \tcc*[f]{\rm Set/array of items to be moved from 
                                         $L$ to $T$}\\
       $W\assign \tau|T|$ \tcc*[f]{\rm sum of adjusted weights in $T\cup S$}\\
       \lIf{$w_i>\tau$}{include $i$ in $L$}\\
       \Else{$X[0]\assign i$\\
             $W\assign W+w_i$}
       \While{$W\geq(|T|+|X|-1)\min_{h\in L}w(h)$}{
          $h\assign \textnormal{argmin}_{h\in L}w(h)$\\
          move $h$ from $L$ to end of $X$\\
          $W\assign W+w_h$.
       }
       $t\assign W/(|T|+|X|-1)$\\
       generate uniformly random $r\in U(0,1)$\\
       $d\assign 0$\\
       \While{$d<|X|$ and $r\geq 0$}{
          $r\assign r-(1-w_{X[d]}/t)$\\
          $d\assign d+1$
       }
       \lIf{$r<0$}{remove $X[d]$ from $X$}
       \Else{remove uniform random element from $T$}
       append $X$ to $T$.
     }
   \end{algorithm}
\end{small}

Algorithm \ref{alg:varopt} contains the pseudo-code for the
amortized algorithm.
The simple idea is to use a priority queue for the set $L$ of large items,
that is, items whose weight exceeds the current threshold $\tau$. The
priorities of the large items are just their weight. The priority
queue provides us the lightest large item $\ell$ from $L$ 
in constant time. Assuming integer or floating point representation, we can
update the priority queue $L$ in $O(\log\log k)$ time \cite{Tho07}.
The items in $T$ are maintained in an initial segment of an array with
capacity for $k$ items.

We now consider the arrival of a new item $j$ with weight $w_j$,
and let $\tau_{j-1}$ denote the current threshold. All items
in $T$ have adjusted weight $\tau_{j-1}$ while all other
weight have no adjustments to their weights. We will build
a set $X$ with items outside $T$ that we know
are smaller than the upcoming threshold $\tau_j>\tau_{j-1}$. To 
start with, if $w_j\leq \tau_{j-1}$, we set $X=\{j\}$; otherwise we set 
$X=\emptyset$ and add item $j$ to $L$.
We are going to move items from $L$ to $X$ until $L$ only contains
items bigger than the upcoming threshold $\tau_j$. For that
purpose, we will maintain the sum $W$ of adjusted weights in 
$X\cup T$. The sum over
$T$ is known as $\tau_{j-1}|T|$ to which we add $w_j$ if $X=\{j\}$.

The priority queue over $L$ provides us
with the lightest item $\ell$ in $L$. From \req{eq:t} we know that 
$\ell$ should be moved to $X$ if and only if
\begin{equation}\label{eq:move}
W\geq w_\ell(|X|+|T|-1).
\end{equation}
If \req{eq:move} is satisfied, we delete $\ell$ from $L$ and insert
it in $X$ while adding $w_\ell$ to $W$. We repeat these
moves until $L$ is empty or we get a contradiction to \req{eq:move}.

We can now compute the new threshold $\tau_j$ as
\[\tau_j=W/(|X|+|T|).\]
Our remaining task is to find an item to be deleted based a uniformly
random number $r\in(0,1)$. If the total weight $w_X$ in $X$ is such
that $|X|-w_X/\tau_j\leq r$, we delete an item from $X$ as follows. 
With $X$ represented
as an array. Incrementing $i$ starting from $1$, we
stop as soon as we get a value such that $i-w_{X[1..i]}/\tau_j\geq r$, and
then we delete $X[i-1]$ from $X$, replacing it by the last item
from $X$ in the array.

If we do not delete an item from $X$, just delete
a uniformly random item from $T$. Since $T$ fills
an initial segment of an array, we just generate
a random number $i\in [|X|]$, and set
$X[i]=X[|T|]$. Now $|T|$ is one smaller.

Having discarded an item from $X$ or $T$, we move all remaining items 
in $X$ to the array of $T$, placing them behind the current items in $T$.
All members of $T$ have the new implicit adjusted weight $\tau_j$. We
are now done processing item $j$, ready for the next item to arrive.

\begin{theorem}\label{thm:ran-amort}  The above implementation processes
items in $O(\log\log k)$ time amortized time when averaged
over any $k$ consecutive items. Simple cases are handled in constant time,
and are not part of the above amortization. As
a result, we process a randomly permuted stream of length $n$ 
in $O(n+k(\log\log k)(\log n))$ expected time.
\end{theorem}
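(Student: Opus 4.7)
The plan is to establish three claims in sequence. (1)~In any window of $k$ consecutive arrivals, the non-simple arrivals processed by Algorithm~\ref{alg:varopt} do $O(k\log\log k)$ work in total, so their amortized cost is $O(\log\log k)$ per arrival. (2)~With the tentative-threshold test of Section~\ref{sec:amort} grafted on top, each simple arrival can be detected and processed in $O(1)$ time. (3)~Combining (1), (2), Lemma~\ref{lem:new-include}, and Lemma~\ref{lem:pass-threshold}, the expected running time on a randomly permuted stream of length $n$ is $O(n+k(\log\log k)\log n)$.

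For (1), I will decompose the cost of a single non-simple arrival into (a)~priority-queue work on $L$, where each insertion or extract-min takes $O(\log\log k)$ using the queue of \cite{Tho07}; (b)~the inner \textbf{while}-loop that sweeps items from $L$ into $X$; (c)~the scan over $X$ that locates the dropped item via \req{eq:d}; and (d)~the final append of $X$ to the tail of the array holding $T$. Parts (b)--(d) together cost $O(|X|+1)$ for that arrival. Across any window of $k$ consecutive arrivals, at most $k$ items are newly inserted into $L$, and since $|L|\le k$ at the start of the window, the total number of extract-min operations inside the window is at most $2k$. Likewise, each item enters $X$ at most once before being either dropped or moved permanently into $T$, so $\sum|X|\le 2k$ over the window. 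This gives $O(k\log\log k)$ for (a) and $O(k)$ for (b)--(d), hence the claimed amortization.

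For (2), I maintain the minimum weight $w_\ell=\min_{h\in L}w_h$ (readable from the priority queue in $O(1)$) along with the cardinality $|T|$. On arrival of item $j$ with $w_j\le\tau_{j-1}$, I tentatively form $\tau'=(w_j+|T|\tau_{j-1})/|T|$; if $\tau'\ge w_\ell$, the sweep in step (b) would be nonempty, and we fall back to the full procedure. Otherwise $\tau'$ is the correct new threshold, $X=\{j\}$, and $T$ is unchanged. Drawing $r\in(0,1)$ and applying \req{eq:d} decides in $O(1)$ whether $j$ is to be dropped; if so, neither $L$ nor $T$ is touched and only the scalar $\tau$ is updated, so the arrival is handled in constant time. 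Any failure of these checks is declared non-simple and passed to Algorithm~\ref{alg:varopt}.

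For (3), every arrival is either simple (cost $O(1)$) or non-simple, and a non-simple arrival can occur only when $w_j>\tau_{j-1}$ or when the new threshold crosses some weight currently in $L$. Lemma~\ref{lem:new-include} bounds the expected number of the former events by $k(\ln(n/k)+O(1))$, and Lemma~\ref{lem:pass-threshold} bounds the expected number of the latter by the same quantity, so the expected number of non-simple arrivals is $O(k\log(n/k))$. The same accounting as in (1), applied globally rather than within a $k$-window, charges $O(\log\log k)$ per non-simple arrival, for an expected total of $O(k(\log\log k)\log(n/k))=O(k(\log\log k)\log n)$. Adding $O(n)$ for the simple arrivals yields the claimed bound. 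The main subtlety throughout is to verify that the $X$-scan and the $L$-to-$X$ sweep can be charged against the priority-queue activity rather than billed per arrival; the crucial point is that each item traverses $L\to X\to T$ at most once, so its total amortized charge is $O(\log\log k)$ no matter how many arrivals the stream contains.
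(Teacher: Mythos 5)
Your overall strategy is the same as the paper's: charge the priority-queue operations and the total size of the sets $X$ to the $O(k)$ items that can enter or leave $L$ over any $k$ consecutive arrivals, handle the simple case in $O(1)$ time with a tentatively computed threshold, and then combine with Lemmas~\ref{lem:new-include} and~\ref{lem:pass-threshold} exactly as in Theorem~\ref{thm:ran-worst}. Parts (1) and (2) of your argument are fine.

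There is, however, one accounting slip in part (3). You restrict the constant-time path to arrivals with $w_j\le\tau_{j-1}$ and then assert that Lemma~\ref{lem:new-include} bounds the number of arrivals with $w_j>\tau_{j-1}$. That lemma only counts arrivals that end up \emph{included in the reservoir}, and an item with $w_j>\tau_{j-1}$ need not be included: it is first inserted into $L$, may then be swept into $X$ because the new threshold $\tau_j$ exceeds $w_j$, and may be the item dropped --- all without $\tau_j$ passing any \emph{previous} large weight, so neither Lemma~\ref{lem:new-include} nor Lemma~\ref{lem:pass-threshold} covers such an arrival. The paper avoids this by not comparing $w_j$ with $\tau_{j-1}$ at all in the fast path: it computes the tentative new threshold $\tau=(w_j+|T|\tau_{k,j-1})/|T|$ and reverts only if $w_j\ge\tau$ (which forces $j$ into the sample, hence is an inclusion event), if $\tau\ge w_\ell$ (a pass-threshold event), or if the $r$-test keeps $j$ (again an inclusion event), so every revert is charged to one of the two lemmas. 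You can repair your version either by adopting that test, or by adding a separate bound for your extra category: $w_j>\tau_{j-1}$ implies that at most $k$ of the first $j-1$ weights are $\ge w_j$, so under a random permutation the $j$-th arrival exceeds the current threshold with probability at most $(k+1)/j$, giving an expected $O(k\log(n/k))$ such arrivals, which preserves your final bound.
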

\begin{proof}
First we argue that over $k$ items, 
the number of priority queue updates for $L$ 
is $O(k)$. Only new items are inserted in $L$ and we started
with at most $k$ items in $L$, so the total number of updates
is $O(k)$, and each of them take $O(\log\log k)$ time. The
remaining cost of processing a given item $j$ is 
a constant plus $O(|X|)$ where $X$ may include the new
item $j$ and items taken from $L$. We saw above that we
could only take $O(k)$ items from $L$ over the processing of
$k$ items.

Now consider the simple case where we get a new light item $i$ with 
$w_i<\tau$ and where $w_i+\tau|T|< \min L|T|$. In this case,
no change to $L$ is needed. We end up with $X=\{i\}$, and then 
everything is done in constant
time. This does not impact our amortization at all.
Finally, we derive the result for randomly permuted sequences
as we derived Theorem \ref{thm:ran-worst}, but exploiting
the better amortized time bound of $O(\log\log k)$ for the
non-simple cases.
\end{proof}

\section{An $\Omega(\log k/\log\log k)$ time worst-case lower bound}
\label{sec:lower-bound}

Above we have a gap between the $\varoptk$ implementation from Section
\ref{sec:worst-case} using balanced trees to process each item in
$O(\log k)$ time, and the $\varoptk$ implement from Section
\ref{sec:amortized} using priority queues to process the items in
$O(\log\log k)$ time on the average. These bounds assume that we use
floating point numbers with some precision $\wp$, accepting a
fractional error of order $1/2^\wp$. For other weight-biased reservoir
sampling schemes like priority sampling \cite{Tho07}, we know how to
process every item in $O(\log\log k)$ worst-case time. Here we will
prove that such good worst-case times are not possible for
$\varoptk$. In particular, this implies that we cannot hope to get a
good worst-case implementation via priority queues that processes each and 
every item with only a constant number of priority queue operations.

We will prove a lower bound of $\Omega(\log k/\log\log k)$ on the worst-case
time needed to process an item for $\varoptk$. The lower-bound is
in the so-called cell-probe model, which means that it only
counts the number of memory accesses. In fact,
the lower bound will hold for any scheme that
satisfies (i) to minimize $\SV$. For a stream with more then $k$ items,
the minimal estimator in the reservoir should 
be the threshold value $\tau$ such that
\begin{equation}\label{eq:worst} \sum_{i\in R}\min\{1,w_i/\tau\}=k\iff
\sum\{w_i|i\in R, w_i\leq \tau\}=\tau(k-
|\{w_i|i\in R, w_i>\tau\}|).\end{equation}
\paragraph{Dynamic prefix sum}
Our proof of the $\varoptk$ lower bound will be by reduction from 
the dynamic prefix sum
problem: let $x_0,...,x_{k-1}\in \{-1,0,1\}$ be variables, all
starting as $0$. An update of $i$ sets $x_i$ to some value, and a
prefix sum query to $i$ asks for $\sum_{h\leq i} x_i$. We consider
``set-all-ask-once'' operation sequences where we first set every
variable exactly once, and then perform an arbitrary 
prefix sum query.
\begin{lemma}[{\cite{AHR98,FS89}}]\label{lem:restr-prefix}
No matter how we represent, update, and query information, there
will be set-all-ask-once operation sequences where some
operation takes $\Omega(\log k/\log\log k)$ time.
\end{lemma}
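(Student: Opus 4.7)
The plan is to invoke the cell-probe chronogram method of Fredman and Saks. Under that model only memory accesses are counted, and each memory cell holds $O(\log k)$ bits. I would show the stronger statement that if we set the $k$ variables uniformly and independently at random in $\{-1,0,1\}$ and then ask a single prefix-sum query with a random index, the expected number of cell probes of the query alone is $\Omega(\log k/\log\log k)$; hence in the worst case some operation in the sequence must take that long.

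The first step is to partition the $k$ updates, ordered by recency with respect to the query, into epochs $E_1,\ldots,E_r$ where $E_j$ consists of $b^j$ updates, with $b=\Theta(\log^{c} k)$ for a small constant $c$ and $r=\Theta(\log k/\log\log k)$. The second step is the epoch-by-epoch information argument. Fix $j$ and condition on all randomness outside $E_j$; then the query answer still has $\Omega(|E_j|)$ bits of residual entropy coming from the random updates inside $E_j$. The only cells whose contents are not fixed by the conditioning are the cells last written during or after $E_j$, and of those only the cells last written in $E_j$ actually encode the $E_j$ randomness (cells last written afterwards encode their own epoch's randomness, which is also random after conditioning but is independent of the $E_j$ updates). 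An encoding argument then forces the query to probe $\Omega(1)$ cells last written in $E_j$ in expectation. Summing over the $r$ epochs gives the claimed $\Omega(\log k/\log\log k)$ probe count.

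The main obstacle is justifying the per-epoch $\Omega(1)$ bound rigorously, since the query's probe pattern is adaptive and a naive counting argument cannot pin down which cells are ``from'' which epoch before the query is executed. The Fredman--Saks style argument handles this by showing that for a random transcript of the updates, the query's adaptive decision tree can be charged to cells indexed by their last-write epoch, and an averaging argument over the choice of $j$ and over the random inputs produces the per-epoch constant. Because this is a standard and delicate calculation, and because the set-all-ask-once restriction is precisely the hardest case already treated in \cite{AHR98,FS89}, I would simply appeal to their theorems rather than reproduce the chronogram analysis in detail.
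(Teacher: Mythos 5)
There is a genuine gap, and it sits exactly where you chose to defer. Your fallback---``simply appeal to their theorems''---does not work as stated: the Fredman--Saks result is an amortized bound for sequences in which updates and queries are interleaved, and it does not by itself give a worst-case per-operation bound for a set-all-ask-once sequence consisting of $k$ updates followed by a single query; this is precisely the obstacle the paper points out. The paper does not redo the chronogram analysis at all; instead it reduces from the marked-ancestor lower bound of Alstrup et al.: take their fixed tree with a fixed assignment/query structure, lay it out as an Euler tour, and set $x_{v^\downarrow}=b_v$ and $x_{v^\uparrow}=-b_v$, so that the sum along any leaf--root path becomes the prefix sum $\sum_{h\leq v^\downarrow}x_h$; the $\Omega(\log k/\log\log k)$ worst-case bound then transfers verbatim to the set-all-ask-once prefix-sum problem. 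Your proposal contains neither this reduction nor a complete substitute for it.

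Moreover, the chronogram sketch you outline cannot stand on its own as written. A single prefix-sum answer is an integer of magnitude at most $k$, so it carries only $O(\log k)$ bits; it cannot ``retain $\Omega(|E_j|)$ bits of residual entropy.'' The correct statement concerns the entropy of the whole vector of answers over all query indices (or the expected information a random query must extract about $E_j$), and turning that into a per-epoch $\Omega(1)$ probe bound for an adaptive query is exactly the delicate encoding argument you decline to carry out. Also, cells last written after epoch $E_j$ are \emph{not} independent of the $E_j$ updates: later updates may read cells written during $E_j$ and propagate that information forward. The real argument is a counting one---there are only $O(b^{j-1}t_u)$ such cells, hence $O(b^{j-1}t_u\log k)$ bits, which is $\ll |E_j|$ once $b\gg t_u\log k$---and it presupposes a worst-case bound $t_u$ on update time that must be folded into the trade-off. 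Since the step you omit is the one that is not available off the shelf, the proof is incomplete; either carry out the worst-case single-query chronogram/encoding argument in full, or use the Euler-tour reduction to the marked-ancestor bound as the paper does.
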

The above lemma can be proved with the chronograph method of Fredman
and Saks \cite{FS89}. However, \cite{FS89} is focused on amortized
bounds and they allow a mix of updates and queries.  Instead of
rewriting their proof to get a worst-case bound with a single query at
the end, we prove Lemma \ref{lem:restr-prefix} by reduction from a
marked ancestor result of Alstrup et al. \cite{AHR98}.  The reduction
was communicated to us by Patrascu \cite{Pat09}.
\begin{proof}[ of Lemma \ref{lem:restr-prefix}]
For every $k$, Alstrup et al. \cite{AHR98} shows that there is a fixed
rooted tree with $k$ nodes and a fixed traversal sequence of the
nodes, so that if we first assign arbitrary 0/1 values $b_v$ to the
nodes $v$, and then query sum over some leaf-root-path, then no matter
how we represent the information, there be a sequence of assignments
ended by a query such that one of the operations take $\Omega(\log
k/\log\log k)$ time.

To see that this imply Lemma \ref{lem:restr-prefix}, consider
a sequence $x_0,...x_{2n-1}$ corresponding to an Euler tour of
the tree. That is, first we visit the root, then we visit
the subtrees of the children one by one, and then we end at
the root. Thus visiting each node twice, first down and later
up. For node $v$ let $v^\downarrow$ be the Euler tour index
of the first visit, and $v^\uparrow$ be the index of the last
visit.

Now, when we set $b_v$ in the marked ancestor problem, we
set $x_{v^\downarrow}=b_v$ and $x_{v^\uparrow}=-b_v$. Then for
any leaf $v$, the sum of the bits over the leaf-root-path is exactly
the prefix sum $\sum_{h\leq v^\downarrow}x_h$. Hence Lemma \ref{lem:restr-prefix}
follows from the marked ancestor lower-bound of Alstrup et al. \cite{AHR98}.
\end{proof}

\paragraph{From prefix-sum to $\varoptk$}
We will now show how $\varoptk$ can be used to solve the prefix-sum problem.
The construction is quite simple. Our starting point is the
set-all-ask-once prefix-sum problem with $k$ values 
$x_0,...,x_{k-1}$. When we want
to set $x_i$, we add an item $i$ with weight $w_i=4k^3+4ki+x_i$. Note
that these items can arrive in any order. To query prefix $j$,
we essentially just add a final item $k$ with $w_{k+1}=2k(i+1)/2$, 
and ask for the threshold $\tau$ which is the adjusted weight
of item $0$ or item $k$, whichever is not dropped from the sample.

The basic point is that our weights are chosen such that the threshold $\tau$ 
defined in \req{eq:worst} must be in $(w_i,w_{i+1})$, implying
that
\[\tau=(w_k+\sum_{h\leq i} w_h)/i\ .\]
Since we are using floating point numbers, there may be some
errors, but with a precision $\wp\geq 4\log k$ bits, we
get $(w_k+\sum_{h\leq i} w_h)$ if we multiply by $i$ 
and round to the nearest integer. Finally, we
take the result modulo $3k$ to get the desired prefix sum 
$\sum_{h\leq i} x_i$. 

In our cell-probe model, the derivation of the prefix $\sum_{h\leq i} x_i$. from
$\tau$ is free since it does not involve memory access. From Lemma
\ref{lem:restr-prefix} we know that one of the prefix updates or the
last query takes $\Omega(\log k/\log\log k)$ memory
accesses. Consequently we must use this many memory accesses on our instance
of $\varoptk$, either in the processing of one of the items, or
in the end when we ask for the threshold $\tau$ which is the
adjusted weights of item $0$ or
item $k$. Hence we conclude
\begin{theorem}\label{thm:lower-bound}  No matter how we implement
$\varoptk$, there are sequences of items such that
the processing of some item takes $\Omega((\log k)/(\log\log k))$ time.
\end{theorem}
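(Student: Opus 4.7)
The plan is to prove Theorem \ref{thm:lower-bound} by reduction from the set-all-ask-once dynamic prefix-sum problem, whose $\Omega(\log k/\log\log k)$ cell-probe lower bound is Lemma \ref{lem:restr-prefix}. Given a prefix-sum instance $x_0,\ldots,x_{k-1}\in\{-1,0,1\}$, I will simulate each ``set'' operation by feeding a single item into a purported $\varoptk$ implementation, and simulate the single prefix query by feeding one additional query item and then reading off the final threshold from the reservoir. If $\varoptk$ could process every item (including the last) in $o(\log k/\log\log k)$ time in the cell-probe model, then this simulation would beat Lemma \ref{lem:restr-prefix}, a contradiction.

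The encoding I would use is $w_i := 4k^3 + 4ki + x_i$ for $i=0,\ldots,k-1$, with a final query item indexed by $k$ of weight $w_k$ chosen so that, after all $k+1$ items have been inserted into a $\varoptk$ reservoir of size $k$, the equation \req{eq:worst} is forced to pick its threshold $\tau$ in the interval $(w_i, w_{i+1})$ corresponding to the queried prefix. Concretely, for a prefix query of length $i+1$ I would choose $w_k$ so that the top $k-i-1$ weights $w_{i+1},\ldots,w_{k-1}$ all lie above $\tau$ (hence are kept at face value) while $w_0,\ldots,w_i$ and $w_k$ all lie at or below $\tau$. Then \req{eq:worst} collapses to $w_k + \sum_{h\le i} w_h = i\,\tau$, so $\tau = (w_k + \sum_{h\le i} w_h)/i$. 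Since $\sum_{h\le i}w_h$ differs from $(i+1)(4k^3 + 2ki)$ by exactly $\sum_{h\le i}x_h$, and since $|\sum_{h\le i}x_h|<k$, multiplying $i\,\tau$ out and reducing modulo $3k$ (say) recovers the desired prefix sum. Working in floating-point arithmetic with precision $\wp \ge 4\log k$ bits, the relative error $2^{-\wp}$ is far below the scale of $x_h\in\{-1,0,1\}$ relative to $w_h = \Theta(k^3)$, so rounding is harmless.

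The steps I would execute in order are: (1) verify the weight separation, i.e.\ that with the stated $w_k$ we always have $w_i < \tau < w_{i+1}$ regardless of the values of the $x_h$, which uses only the coarse gap $4k$ between consecutive weights against the perturbation bound $|x_h|\le 1$; (2) derive the closed-form $\tau = (w_k + \sum_{h\le i}w_h)/i$ from \req{eq:worst}; (3) show that decoding $\tau$ into $\sum_{h\le i}x_h$ is a fixed arithmetic post-processing that is free in the cell-probe model; (4) observe that in a $\varoptk$ reservoir, the threshold is exactly the adjusted weight carried by any ``small'' sampled item, in particular by item $0$ or item $k$ (whichever survives), so reading $\tau$ costs $O(1)$ memory accesses once we know which of these two items is in the reservoir; and (5) conclude via Lemma \ref{lem:restr-prefix} that some one of the $k+1$ insertions, or the final read, must cost $\Omega(\log k/\log\log k)$ probes.

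The main obstacle I expect is step~(1): one must check that the encoding really forces the threshold into the single predicted gap for \emph{every} choice of the $x_h$'s and every prefix $i$, because \req{eq:worst} is defined implicitly and the choice of $w_k$ has to be tuned to the query index $i$ without disturbing the separation. The gap $4k$ between the $w_i$ and the perturbation of at most $1$ in each $w_h$ leave room to choose $w_k$ in a sub-interval of $(w_i, w_{i+1})$ that is robust against the at most $k$ accumulated perturbations in $\sum_{h\le i}x_h$; once that slack is verified, the rest is bookkeeping.
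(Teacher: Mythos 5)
Your proposal is correct and follows essentially the same route as the paper's own proof: a reduction from the set-all-ask-once prefix-sum problem of Lemma~\ref{lem:restr-prefix}, using the same encoding $w_i=4k^3+4ki+x_i$, a query item tuned so that the threshold of \req{eq:worst} lands in the gap $(w_i,w_{i+1})$, and free arithmetic decoding of $\tau$ (read off as the adjusted weight of item $0$ or the query item) in the cell-probe model. Your extra care in verifying the weight separation is a welcome elaboration of a step the paper treats tersely, but it is not a different argument.
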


\section{Some experimental results on Netflix  data}  \label{exp:sec}
 We illustrate both the usage and 
the estimate quality attained by
$\varoptk$ through an example on a real-life data set.
The Netflix Prize~\cite{netflix} data set
consists of reviews of 17,770 distinct movie titles by $5\times 10^5$
reviewers.  The weight we assigned to each movie title is the 
corresponding number of reviews. 
We experimentally compare $\varopt$ to state of the art reservoir sampling
methods.
All methods produce a fixed-size sample of
$k=1000$ titles along with an assignment of adjusted weights to 
included titles.  
These summaries (titles and adjusted weights)
support unbiased estimates on the 
weight of subpopulations of titles specified by arbitrary selection
predicate. Example selection predicates are
``PG-13'' titles, ``single-word'' titles, or ``titles released
in the 1920's''.    An estimate of the total number of reviews
of a subpopulation
is obtained by applying the selection predicate to all titles included in the
sample and summing the adjusted weights over titles for which the
predicate holds.

We partitioned the titles into subpopulations and 
computed the sum of the square errors of the estimator over the partition.
 We used natural set of partitions based
on ranges of release-years of the titles (range sizes of 1,2,5,10 years).
Specifically, for partition with range size $r$, a title with
release year $y$ was mapped into a subset containing
all titles whose release year is $y\; {\rm mod}\; r$.  We also used the
value $r=0$ for single-titles
(the finest partition).

  The methods
compared are priority sampling (\pri)~\cite{DLT07}, ppswor (probability
proportional to size
sampling with replacement) with the rank-conditioning estimator (\ws\
\rc)~\cite{CK07,bottomk:VLDB2008}, ppswor 
with the subset-conditioning estimator (\ws\
\ssc)~\cite{CK07,bottomk:VLDB2008}, and \varopt.  
We note that \ws\ \ssc\  dominates (has smaller variance on all distributions 
and subpopulations)
\ws\ \rc, which in turn, dominates
the classic ppswr Horvitz-Thomson estimator~\cite{CK07,bottomk:VLDB2008}.
Results are shown in
Figure~\ref{netflix:fig}.

The \pri\ and \ws\ \rc\ estimators have zero covariances, 
and therefore, as Figure \ref{netflix:fig} shows\footnote{The slight
increase disappears as we average over more and more runs.}, 
the sum of square errors is
invariant to the partition (the sum of variances is 
equal to $\SV$).

The \ws\ \ssc\ and \ws\ \rc\ estimators have the same $\SV$
and \pri~\cite{Sze06} has nearly the same $\SV$ as the optimal $\varopt$.
Therefore, as the figure shows, on single-titles ($r=0$), \ws\ \rc\ performs
the same as \ws\ \ssc\ and \pri\ performs (essentially) as well as $\varopt$.
Since $\varopt$ has optimal (minimal) $\SV$, it outperforms all other
algorithms.


 We next turn to larger subpopulations. Figure \ref{netflix:fig} illustrates
that for
$\varopt$ and the \ws\ \ssc, the sum of square errors 
{\em decreases} with subpopulation size and therefore they have
significant benefit over \pri\ and \ws\ \rc.
We can see that $\varopt$, that has optimal average variance for any
subpopulation size outperforms \ws\ \ssc.

 To conclude,
$\varopt$ is the winner, 
being strictly better
than both \pri\ and \ws\ \ssc. 
In Appendix~\ref{sec:ppswor-bad} we provide theoretical
examples where \varoptk\ has a variance that is a factor $\Omega(\log k)$ smaller than that of {\em any\/} ppswor
scheme, \ws\ \ssc\ included, so the performance gains of \varoptk\ can be much larger
than on this particular real-life data set.

\begin{figure}[htbp]
{\small \centerline{
\begin{tabular}{c}
\epsfig{figure=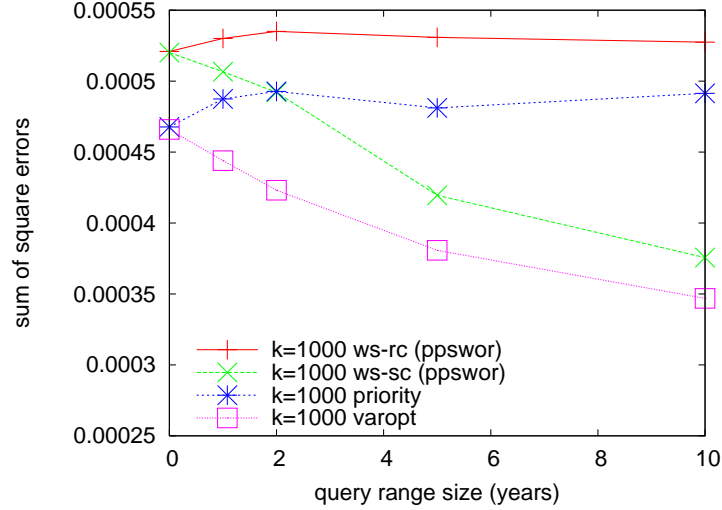,width=0.6\textwidth} 
\end{tabular}}
}
\caption{Sum of the square errors of the estimates over each partition, averaged over 500 repetitions of the respective summarization method.}
\label{netflix:fig}
\end{figure}

\section{Chernoff bounds}\label{sec:Chernoff}
In this section we will show that the $\varopt_k$ schemes from
Section \ref{sec:recur} provide estimates whose
deviations can be bounded with the Chernoff bounds usually associated
with independent Poisson samples.

Recall that we defined a $\varopt_k$ as a sampling scheme with unbiased 
estimation such that given items $i\in[n]$ with weights $w_i$:
\begin{description}
\item[\textnormal{(i)}] Ipps. We have the sampling probabilities 
$p_i=\min\{1,w_i/\tau_k\}$ where the threshold $\tau_k$ is the unique value 
such 
that $\sum_{i\in[n]}\min\{1,w_i/\tau_k\}=k$ assuming $k<n$; otherwise
$\tau_k=0$ meaning that all items are sampled. 
A sampled item $i$ gets
the Horvitz-Thompson estimator $\hat w_i=w_i/p_i=\max\{w_i,\tau_k\}$.
\item[\textnormal{(ii)}] At most $k$ samples---this hard capacity constraint
prevents independent Poisson samples.
\item[\textnormal{(iii)}] No positive covariances.
\end{description}
In Section \ref{sec:recur} we noted that when $n=k+1$, there is only a
unique $\varoptk$ scheme; namely $\varoptkbase$ which drops one
item which is item $i$ with probability $q_i=1-p_i$ with $p_i$ the
ipps from (i). We also proved that the $\varoptk$ conditions
(i)--(iii) are preserved by recurrence \req{eq:gen-recurse} stating
that
\[\varoptk(\bigcup_{x\in[m]}I_x)=\varoptk(\bigcup_{x\in[m]}\varopt_{k_x}(I_x))\ ,\]
where $I_1,...,I_m$ are disjoint non-empty sets
of weighted items and $k_x\geq k$ for each $x\in[m]$. 

In this section, we will show that any scheme generated as above satisfies
property (iii$+$) below which can be seen as a higher-order version (iii).

\paragraph{(iii$+$)} {\em High-order inclusion and exclusion probabilities are bounded by the respective product of first-order probabilities}. More precisely
for any $J\subseteq [n]$, 
\begin{eqnarray*}
\mbox{(I):}\quad\quad p[J] & \leq & \prod_{i\in J} p_i \\
\mbox{(E):}\quad\quad q[J] & \leq & \prod_{i\in J} q_i 
\end{eqnarray*}
where $p_i$ is the probability that item $i$ is included
in the sample $S$ and $p[J]$ is the probability that all $i\in J$
are included in the sample. Symmetrically
$q_i$ is the probability that item $i$ is excluded 
from the sample $S$ and $q[J]$ is the probability that all $i\in J$
are excluded from the sample. 
We will use $X_i$ as the indicator variable for $i$ being in the sample,
so $\Pr[X_i=1]=p_i$ and $\Pr[X_i=0]=q_i$.

It is standard that a special case of
Property (iii$+$)(I) implies (iii): For any $i,j$,
$p_{i,j}\leq p_i p_j$ 
combined with Horvitz-Thompson estimators implies nonnegative covariance
between $\hat w_i$ and $\hat w_j$.

The significance of (iii$+$) was argued by Panconesi and
Srinivasan~\cite{PancSri:sicomp97} who used it to prove Chernoff
bounds that we usually associate with independent Poisson
sampling. They did not consider input weights, but took the inclusion
probabilities $p_i$ as input. For us this corresponds to the special
case where the weights sum to $k$ for then we get $p_i=w_i$. Given the
inclusion probabilities $p_i$, Srinivasan \cite{Sri01} presented an
off-line sampling scheme realizing (i)-(iii+). Strengthening the
results from~\cite{PancSri:sicomp97,Sri01} slightly, we prove
\begin{theorem} \label{thm:sri}
Let $I\subseteq [n]$ and $m=|I|$. For $i\in I$, let $X_i$ be a random 0/1 variable which is $1$ with
probability $p_i$ and $0$ otherwise. The variables may not be
independent. Let
$X_I=\sum_{i\in I} X_i$, and $\mu = E[X]=\sum_{i\in I} p_i$. Finally
let $0 < a < m$.
\begin{itemize}
\item[(I)] If (iii$+$)(I) is satisfied and $a\geq\mu$, then 
\begin{equation} \label{chernoff:upper}
\Pr[X_I \ge a] \leq
\left(\frac{m-\mu}{m-a}\right)^{m-a}\left(\frac{\mu}{a}\right)^a
\quad 
\left[\leq
e^{a-\mu}\left(\frac{\mu}{a}\right)^a
\right]\ .
\end{equation}
\item[(E)] If (iii$+$)(E) is satisfied and $a\leq\mu$, then 
\begin{equation} \label{chernoff:lower}
\Pr[X_I \le a] \leq
\left(\frac{m-\mu}{m-a}\right)^{m-a}\left(\frac{\mu}{a}\right)^a
\quad 
\left[\leq
e^{a-\mu}\left(\frac{\mu}{a}\right)^a
\right]\ .
\end{equation}
\end{itemize}
\end{theorem}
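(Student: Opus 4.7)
The plan is to follow the Panconesi--Srinivasan route: reduce the tail bound to a moment generating function estimate, show that property (iii$+$) lets us dominate the MGF by that of independent Bernoullis, and then carry out the standard Chernoff optimization, sharpened via a concavity step to yield the stronger binomial form. Throughout, I treat (I) first and derive (E) by duality.

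\textbf{From (iii$+$)(I) to an MGF inequality.} Fix $t>0$ and observe that since $X_i\in\{0,1\}$,
\[
e^{tX_i}=1+(e^{t}-1)X_i,\qquad \text{so}\qquad e^{tX_I}=\prod_{i\in I}\bigl(1+(e^{t}-1)X_i\bigr).
\]
Expanding the product and taking expectation,
\[
\E\bigl[e^{tX_I}\bigr]=\sum_{J\subseteq I}(e^{t}-1)^{|J|}\,\E\Bigl[\prod_{i\in J}X_i\Bigr]=\sum_{J\subseteq I}(e^{t}-1)^{|J|}p[J].
\]
Since $t>0$, each coefficient $(e^{t}-1)^{|J|}$ is nonnegative, so by (iii$+$)(I) I may replace $p[J]$ by $\prod_{i\in J}p_i$ and recollect the expansion to obtain
\[
\E\bigl[e^{tX_I}\bigr]\le\prod_{i\in I}\bigl(1+p_i(e^{t}-1)\bigr)=\prod_{i\in I}\E\bigl[e^{tX_i}\bigr].
\]

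\textbf{Chernoff optimization, sharpened.} Markov's inequality now gives
\[
\Pr[X_I\ge a]\le e^{-ta}\prod_{i\in I}\bigl(1+p_i(e^{t}-1)\bigr).
\]
The function $p\mapsto\log(1+p(e^{t}-1))$ is concave in $p\in[0,1]$ for $t>0$, so by Jensen's inequality applied with the uniform distribution on $I$, and using $\sum_i p_i=\mu$ together with $|I|=m$,
\[
\prod_{i\in I}\bigl(1+p_i(e^{t}-1)\bigr)\le\Bigl(1+\tfrac{\mu}{m}(e^{t}-1)\Bigr)^m.
\]
Thus $\Pr[X_I\ge a]\le e^{-ta}\bigl(1+\frac{\mu}{m}(e^{t}-1)\bigr)^m$. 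I then minimize over $t>0$: the stationary condition reduces to $e^{t}=\frac{a(m-\mu)}{\mu(m-a)}$, which is $>1$ precisely when $a>\mu$. Substituting this $t$ and simplifying (a routine binomial-entropy calculation) yields the claimed bound
\[
\Pr[X_I\ge a]\le\Bigl(\tfrac{m-\mu}{m-a}\Bigr)^{m-a}\Bigl(\tfrac{\mu}{a}\Bigr)^a.
\]
The bracketed weaker form follows by the slack $1+p_i(e^{t}-1)\le e^{p_i(e^{t}-1)}$ at the MGF step and optimizing at $e^{t}=a/\mu$, which gives $e^{a-\mu}(\mu/a)^a$.

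\textbf{Duality for (E).} Setting $Y_i=1-X_i$ converts exclusion into inclusion: $\prod_{i\in J}Y_i=1$ exactly on the event $\{X_i=0\ \forall i\in J\}$, so (iii$+$)(E) for the $X_i$ is precisely (iii$+$)(I) for the $Y_i$, with single-item probabilities $q_i=1-p_i$ and mean $\E[Y_I]=m-\mu$. The event $\{X_I\le a\}$ equals $\{Y_I\ge m-a\}$, and with $m-a\ge m-\mu$ the (I)-bound just proved applies to the $Y_i$ with $\mu$, $a$ replaced by $m-\mu$, $m-a$. Substituting gives exactly the stated (E)-bound, with the two factors swapping roles.

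\textbf{Expected main obstacle.} The one genuinely nontrivial step is the MGF domination $\E[e^{tX_I}]\le\prod_i\E[e^{tX_i}]$ from (iii$+$) alone: this is where the product-over-subsets expansion and the sign of $(e^{t}-1)^{|J|}$ must line up correctly. Everything downstream is standard Chernoff calculus plus one concavity inequality, and the lower tail is immediate by complementation.
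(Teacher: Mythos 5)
Your proposal is correct, and its overall architecture coincides with the paper's: prove (I) via an MGF bound, run the standard Chernoff optimization, and obtain (E) by the complementation $Y_i=1-X_i$, which turns (iii$+$)(E) into (iii$+$)(I) and $\{X_I\le a\}$ into $\{Y_I\ge m-a\}$ — this duality step is word-for-word the paper's Lemma \ref{thm:sri-}, and your Jensen step is just the paper's arithmetic-geometric mean inequality in logarithmic disguise. Where you genuinely diverge is in the proof of the central MGF domination $\E[e^{tX_I}]\le\prod_{i\in I}\E[e^{tX_i}]$ (the paper's Lemma \ref{lem:etS}). The paper introduces independent Bernoulli copies $\hat X_i$ with the same marginals, expands $e^{tX}$ by its Maclaurin series, and shows moment by moment that $\E[X^k]\le\E[\hat X^k]$, using the multinomial expansion, the identity $X_i^{j}=X_i$ for $j\ge 1$, and (iii$+$)(I) applied to each product $\E[\prod_{i}X_i]$. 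You instead exploit the 0/1 identity $e^{tX_i}=1+(e^t-1)X_i$, expand $\prod_{i\in I}\bigl(1+(e^t-1)X_i\bigr)$ over subsets $J\subseteq I$, and apply (iii$+$)(I) termwise, which is legitimate precisely because the coefficients $(e^t-1)^{|J|}$ are nonnegative for $t>0$; recollecting the sum gives the product bound in one stroke. Both arguments ultimately rest on the same inequality $\E[\prod_{i\in J}X_i]\le\prod_{i\in J}p_i$, but your version avoids the coupling with independent copies and the series/moment bookkeeping, so it is shorter and arguably more transparent; the paper's moment comparison (borrowed from Panconesi--Srinivasan) has the mild advantage of isolating the statement $\E[X^k]\le\E[\hat X^k]$, which is reusable beyond exponential moments. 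One cosmetic remark: the optimization at $e^t=\frac{a(m-\mu)}{\mu(m-a)}$ requires $a>\mu$ for $t>0$, as you note; the boundary case $a=\mu$ is trivial since the claimed bound is then $1$, so nothing is lost.
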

\subsection{Relevance to $\varoptk$ estimates}\label{sec:relevance}
The above Chernoff bounds only address the number of sampled
items while we are interested in estimates of the weight
of some subset $H\subseteq[n]$. We split $H$ in a set of heavy
items $L=\{i\in H|w_i\geq \tau_k\}$ and a set of light items $I=H\setminus L=\{i\in H|w_i<\tau_k\}$. Then
our estimate can be written as 
\[\hat w_H=\sum_{i\in L} w_i
+\tau_k|S\cap I|= \sum_{i\in L} w_i +\tau_k\sum_{i\in I}X_i=
w_L+\tau_k X_I\ .\] Here $X_I$ is the only variable part of the
estimate and Theorem \ref{thm:sri} bounds the probability of deviations
in $X_I$ from its mean $\mu$. Using these bounds we can easily derive
confidence bounds like those in~\cite{Tho06} for
threshold sampling.

\subsection{Satisfying (iii$+$)}
In this subsection, we will show that (iii$+$) is
satisfied by all the $\varoptk$ schemes generated with
our recurrence \req{eq:gen-recurse}. As special cases this includes 
our $\varoptk$ scheme for streams
and the schemes of Chao \cite{Cha82}
and Till\'e \cite{Til96} schemes. We note
that \cite{Cha82,Til96} stated only (iii) but
(iii$+$)(I) directly follows from the expressions they provide for 
inclusion probabilities.  Condition (iii+) for
second order exclusion follows from second
order inclusion.  In \cite{Cha82,Til96} there is no mentioning and 
no derivation towards establishing (iii$+$)(E) which is much
harder to establish than (iii$+$)(I).

\begin{lemma}  \label{iiiplusforbase}
$\varopt_{k,k+1}$ satisfies (iii$+$).
\end{lemma}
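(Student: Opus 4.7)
The plan is to exploit the very simple combinatorial structure of $\varopt_{k,k+1}$: it draws its sample by dropping exactly one of the $k+1$ input items, where item $i$ is dropped with probability $q_i = 1 - p_i$ for $p_i = \min\{1, w_i/\tau_k\}$. By the definition of the threshold, $\sum_{i} p_i = k$, so $\sum_{i} q_i = (k+1) - k = 1$, which confirms that these numbers really do form a probability distribution over which single item is dropped.

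First I would dispose of (iii$+$)(E). Since only one item is ever excluded from $S$, the event that every item of $J$ is excluded has probability zero whenever $|J| \geq 2$, so $q[J] = 0 \leq \prod_{i \in J} q_i$ trivially. For $|J| \leq 1$ both sides are equal. So (E) needs no real work in the base case; the point is that the only way to exclude multiple items is through the recurrence, which is addressed in later lemmas.

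Next I would handle (iii$+$)(I). The event that every element of $J$ is included is exactly the event that the dropped item lies outside $J$, so
\[
p[J] \;=\; \sum_{i \notin J} q_i \;=\; 1 - \sum_{i \in J} q_i .
\]
Because each $q_i \in [0,1]$, the Weierstrass product inequality gives
\[
\prod_{i \in J} (1 - q_i) \;\geq\; 1 - \sum_{i \in J} q_i ,
\]
and the right-hand side is exactly $p[J]$, so $p[J] \leq \prod_{i \in J} p_i$, which is (I).

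There is essentially no obstacle for the base case itself: the nontrivial work in establishing (iii$+$) happens in pushing the property through the general recurrence \req{eq:gen-recurse}, particularly for (E) where one must argue that independence across inner subcalls together with (E) at the outer subcall suffices. Lemma \ref{iiiplusforbase} just anchors the induction, and its content reduces to observing that dropping exactly one item makes (E) vacuous beyond singletons and turns (I) into a one-line application of Weierstrass.
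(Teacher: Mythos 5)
Your proof is correct and follows essentially the same route as the paper: observing that exactly one item is dropped with probabilities $q_i$ summing to $1$, so that (E) is vacuous for $|J|\geq 2$, and that $p[J]=1-\sum_{i\in J}q_i\leq\prod_{i\in J}(1-q_i)=\prod_{i\in J}p_i$ gives (I). The paper uses the identical Weierstrass-type inequality, so there is nothing to add.
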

\begin{proof}
Here $\varopt_{k,k+1}$ is the unique $\varoptk$ scheme when $n=k+1$.
It removes one item $i\in [k+1]$ according to
$q_i=1-p_i$ where $p_i$ are the ipps probabilities from (i).
Here $\sum_i p_i=k$ so $\sum_i q_i=1$.  
Consider $J\subseteq [k+1]$.  If $|J|=1$, (iii$+$) trivially holds.
If $|J|>1$, $q[J]=0$ and hence $q[J]\leq \prod_{i\in J} q_i$, establishing 
(iii$+$)(E).  Also 
$p[J]=1-\sum_{i\in J} q_j \leq \prod_{j\in J} (1-q_j)=\prod_{j\in J} p_j$, establishing 
(iii$+$)(I).
\end{proof}

The rest of this subsection is devoted to prove that
(iii$+$) is preserved by \req{eq:gen-recurse}.
\begin{theorem} \label{iiiplusforrecurrence}
$\varopt$ defined by (i)-(iii$+$) satisfies recurrence
\req{eq:gen-recurse}:
\[\varoptk(\bigcup_{x\in[m]}I_x)=\varoptk(\bigcup_{x\in[m]}\varopt_{k_x}(I_x))\ .\]
where $I_1,...,I_m$ are disjoint and $k_1,...,k_m\geq k$.
\end{theorem}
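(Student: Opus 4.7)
The plan is to condition on the intermediate sample $I'=\bigcup_{x\in[m]} I'_x$ and exploit three facts established in Section~\ref{sec:recur}: (a) the outer threshold $\tau'_k$ equals the deterministic value $\tau_k$ determined by the original weights on $I=\bigcup_x I_x$ (Lemmas~\ref{lem:tau-fixed} and~\ref{lem:good-thr}); (b) when item $i\in I_x$ survives its inner sub-call, its adjusted weight is the deterministic value $w'_i=\max\{w_i,\tauxk{x}{k_x}\}$, so the conditional outer-inclusion probability $a_i:=\min\{1,w'_i/\tau_k\}$ depends only on the original weights; (c) the inner sub-calls on the disjoint $I_x$ are mutually independent. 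Combining (a) and (b), the overall inclusion probability factors as $P_i=p_i^{(x)}\,a_i$, where $p_i^{(x)}$ is the ipps probability inside the $x$th inner sub-call.

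For (iii$+$)(I), I would condition on $I'$, apply (iii$+$)(I) on the outer call to obtain $\Pr[J\subseteq S\mid I']\leq \mathbf{1}[J\subseteq I']\prod_{i\in J}a_i$, and then split according to the partition $J_x=J\cap I_x$. Using independence of the inner samples across $x$, this bound factors as $\prod_x \Pr[J_x\subseteq I'_x]\cdot\prod_{i\in J_x}a_i$. Applying inner (iii$+$)(I) to bound $\Pr[J_x\subseteq I'_x]\leq \prod_{i\in J_x}p_i^{(x)}$ and collecting the factors gives $\prod_{i\in J}P_i$, as desired.

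Property (iii$+$)(E) is the main obstacle: only upper bounds on joint exclusion are available inside, but the expression arising after outer conditioning mixes indicators of inclusion and exclusion. Conditioning on $I'$, applying outer (iii$+$)(E) to $J\cap I'$, and using that items in $J\setminus I'$ are automatically outside $S$, I get, with $Y_i=\mathbf{1}[i\in I'_x]$ for $i\in I_x$,
\[
q[J]\;\leq\;\prod_x \E\!\left[\prod_{i\in J_x}(1-a_iY_i)\right].
\]
For each $x$ the plan is to use the identity $1-a_iY_i=(1-a_i)+a_i(1-Y_i)$, both of whose coefficients lie in $[0,1]$, to expand the inner product as a non-negative linear combination (over $K\subseteq J_x$) of joint-exclusion indicators $\prod_{i\in K}(1-Y_i)$. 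Then inner (iii$+$)(E), i.e., $\Pr[\forall i\in K:\,Y_i=0]\leq \prod_{i\in K}(1-p_i^{(x)})$, can be applied term-by-term, because all coefficients are non-negative; reassembling the product yields $\prod_{i\in J_x}\bigl((1-a_i)+a_i(1-p_i^{(x)})\bigr)=\prod_{i\in J_x}(1-a_ip_i^{(x)})=\prod_{i\in J_x}(1-P_i)$. Taking the product over $x$ completes (iii$+$)(E).

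The crux is this non-negative expansion: the naive alternating-sign inclusion-exclusion expansion of $\prod(1-a_iY_i)$ would forbid a term-by-term application of the one-sided inner hypothesis. The degenerate case $|I'|\leq k$ of Lemma~\ref{lem:triv}, in which the outer sub-call does no sampling, I would handle separately; there both parts of (iii$+$) reduce directly to (iii$+$) on the inner calls via independence across the $I_x$.
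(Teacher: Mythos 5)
Your proposal is correct, and while the overall skeleton (condition on the intermediate sample $I'$, use the consistency facts from Lemmas~\ref{lem:tau-fixed} and~\ref{consist_lemma}, and apply (iii$+$) at both stages) matches the paper, your treatment of the exclusion part is a genuinely different and cleaner route. The paper first proves Lemma~\ref{lem:union} for the union of the inner samples and then, in Proposition~\ref{twostage}(E), reaches the same starting point as you do, namely $q[J]\le \sum_{J'\subseteq J}\overline{p}^{(0)}[J',J]\prod_{j\in J'}q^{(1)}[j]$, which is exactly your $\E\bigl[\prod_{i\in J}(1-a_iY_i)\bigr]$; but it then injects the one-sided stage-(0) hypothesis by expressing the exact-inclusion probabilities $\overline{p}^{(0)}[J',J]$ through the alternating-sign inclusion--exclusion identity of Lemma~\ref{incexc:lemma}, regrouping the triple sum over $(L,H,J')$, and checking that the coefficient of each $q^{(0)}[L]$, namely $\prod_{j\in J\setminus L}q^{(1)}[j]\prod_{\ell\in L}\bigl(1-q^{(1)}[\ell]\bigr)$, is non-negative before applying $q^{(0)}[L]\le\prod_{\ell\in L}q^{(0)}[\ell]$ and re-expanding to recover $\prod_{j\in J}q[j]$. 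Your identity $1-a_iY_i=(1-a_i)+a_i(1-Y_i)$ produces the non-negative expansion into joint-exclusion indicators directly at the level of the integrand, so the term-by-term use of inner (iii$+$)(E) is immediate and the sign bookkeeping disappears; moreover, by factoring over the blocks $I_x$ first, you only ever invoke the per-block hypothesis and do not need the union step (the paper's Lemma~\ref{lem:union}) for the exclusion case. What the paper's formulation buys in exchange is a statement (Proposition~\ref{twostage}) about arbitrary consistent two-stage sampling, not tied to the product structure across the $I_x$ --- though your argument adapts to that generality as well, since the expansion works for any joint distribution of the $Y_i$ once (iii$+$)(E) holds for stage (0). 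Your separate handling of the degenerate case $|I'|\le k$ via Lemma~\ref{lem:triv} is also the right move, since the threshold comparison underlying consistency is only developed for $|I'|>k$.
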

We want to show that if each of the subcalls on the right hand side
satisfy (i)-(iii$+$), then so does the resulting call.
In Section \ref{sec:gen-recurse} we proved this for (i)-(iii) is preserved.
It remains to prove that the resulting call satisfies (iii$+$).

We think of the above sampling as divided in two stages.
We start with $I=\bigcup_{x\in[m]}I_x$. {\em Stage (0)\/} is the
combined inner sampling, taking us from $I$ to
$I^{(1)}=\bigcup_{x\in[m]}\varopt_{k_x}(I_x)$. {\em Stage (1)\/} is the outer
subcall taking us from $I^{(1)}$ to the final sample $S=\varoptk(I^{(1)})$.

We introduce some notation.
For every $i\in I$, we denote by $p^{(0)}[i]$ 
the probability that
$i$ is included in $I^{(1)}$.
Then $q^{(0)}[i]=1-p^{(0)}[i]$ is the corresponding exclusion probability.
Observe that $p^{(0)}[i]$ and $q^{(0)}[i]$ are the respective inclusion and
exclusion probabilities of $i$ in $\varopt_{k_x}(I_x)$ for the unique
$x$ such that $i\in I_x$.  For $J\subseteq I$, 
we use the notation $p^{(0)}[J]=\Pr[J\subseteq I^{(1)}]$ and
$q^{(0)}[J]=\Pr[J\cap I^{(1)}=\emptyset]$ 
for the inclusion and exclusion probabilities of $J$ in $I^{(1)}$.
Denote by $p^{(1)}[J|I^{(1)}]$ and $q^{(1)}[J|I^{(1)}]$  
the inclusion and exclusion probabilities of $J$ by
$\varoptk(I^{(1)})$.
We denote by $p[J]$ the probability that all items in $J$ are selected 
for the final sample $S$, and by 
$q[J]$ the probability that no
item of $J$ is selected for $S$. 

Our goal is to show that (iii$+$) is satisfied for the final
sample $S$. As a first easy step exercising our nation, 
we show that (iii$+$) satisfied 
for the sample $I^{(1)}$ resulting from stage (0).
\begin{lemma} \label{lem:union}
If the samples of each independent subcall $\varopt_{k_x}(I_x)$
satisfies (iii$+$)(I) (respectively, (iii$+$)(E)),
then so does their union $I^{(1)}=\bigcup_{x\in[m]}\varopt_{k_x}(I_x)$ 
as a sample of $I$.
\end{lemma}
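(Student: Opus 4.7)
The plan is to exploit the independence of the inner subcalls, since partitioning $J$ by which $I_x$ each element lies in turns the joint probability into a product, and then (iii$+$) applied to each factor already gives the required bound.

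First I would set $J_x = J \cap I_x$ for every $x\in[m]$, so that the $J_x$ partition $J$. Because the samples $\varopt_{k_x}(I_x)$ are drawn independently across $x$, the event $J \subseteq I^{(1)}$ is the intersection of the independent events $J_x \subseteq \varopt_{k_x}(I_x)$, and likewise $J \cap I^{(1)} = \emptyset$ is the intersection of the events $J_x \cap \varopt_{k_x}(I_x) = \emptyset$. Hence
\[
p^{(0)}[J] = \prod_{x\in[m]} \Pr\bigl[J_x\subseteq \varopt_{k_x}(I_x)\bigr],\qquad
q^{(0)}[J] = \prod_{x\in[m]} \Pr\bigl[J_x\cap \varopt_{k_x}(I_x)=\emptyset\bigr].
\]

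Next, by hypothesis each inner subcall satisfies (iii$+$)(I), so that $\Pr[J_x\subseteq \varopt_{k_x}(I_x)] \leq \prod_{i\in J_x} p^{(0)}[i]$, and analogously for (iii$+$)(E). Substituting these bounds into the two displays and using that $\{J_x\}_{x\in[m]}$ partitions $J$ yields $p^{(0)}[J]\leq \prod_{i\in J} p^{(0)}[i]$ and $q^{(0)}[J]\leq \prod_{i\in J} q^{(0)}[i]$, as desired.

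There is essentially no obstacle here; the content of the lemma is entirely that independence of the inner subcalls lets each of (iii$+$)(I) and (iii$+$)(E) tensor correctly across the $I_x$. The genuinely hard work, especially for establishing (iii$+$)(E) for the outer stage $S=\varoptk(I^{(1)})$, is what remains after Lemma~\ref{lem:union}.
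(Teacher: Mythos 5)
Your proof is correct and follows essentially the same route as the paper: partition $J$ into $J_x=J\cap I_x$, use independence of the inner subcalls to factor $p^{(0)}[J]$ (resp.\ $q^{(0)}[J]$) across $x$, and then apply (iii$+$) to each factor. The paper writes out only the inclusion case and notes the exclusion case is symmetric, which you spell out explicitly; otherwise the arguments coincide.
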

\begin{proof}
We consider the case of inclusion. Consider $J\subseteq I$.  
Since $\varopt_{k_x}(I_x)$ are independent, we get
$p^{(0)}[J]=\prod_x p^{(0)}[J_x]$ where
$J_x = J\cap I_x$.
We assumed (iii$+$)(I) for each $\varopt_{k_x}(I_x)$
so $p^{(0)}[J_x]\leq \prod_{i\in J_x}p^{(0)}[i]$.  Substituting we obtain
$p^{(0)}[J]=\prod_{i\in J} p^{(0)}[i]$.   The proof for exclusion
probabilities is symmetric.
\end{proof}
The most crucial property we will need from (i) and (ii) is 
a certain kind of consistency.
Assume some item $i\in I$ survives stage (0) and ends
in $I^{(1)}$. We say that overall sampling is {\em consistent\/} if the probability $p^{(1)}[i|I^{(1)}]$ that
$i$ survives stage (1) is independent of which other
items are present in $I^{(1)}$. In other words, given any two
possible values $I^{(1)}_1$ and $I^{(1)}_2$ of $I^{(1)}$,
we have $p^{(1)}[i|I^{(1)}_1]=p^{(1)}[i|I^{(1)}_2]$, and we
let $p^{(1)}[i]$ denote this unique value. Under consistency,
we also define $q^{(1)}[i]=1-q^{(1)}[i]$, and get some
very simple formulas for the overall inclusion and
exclusion probabilities; namely that $p[i]=p^{(1)}[i]\, p^{(0)}[i]$ and
$q[i]=q^{(0)}[i] + p^{(0)}[i]\,q^{(1)}[i]$.

Note that even with consistency, when $|J|>1$,
$q^{(1)}[J|I^{(1)}]$ and 
$p^{(1)}[J|I^{(1)}]$ may depend on $I^{(1)}$.
\begin{lemma} \label{consist_lemma}
Consider \req{eq:gen-recurse} where the inner subcalls
satisfy properties (i) and (ii) 
and the outer subcall satisfies property (i).
Then we have consistent probabilities $p^{(1)}[i]$ and 
$q^{(1)}[i]$ as defined above.
\end{lemma}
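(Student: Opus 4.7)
The plan is to reduce consistency to the observation that both the outer threshold $\tau'_k$ and the conditional adjusted weight of $i$ in $I^{(1)}$ are deterministic functions of the original input weights, independent of which items happen to be chosen at stage (0). The two ingredients are Lemma \ref{lem:unique-multi} applied to each inner subcall and property (i) of the outer subcall.

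First I will apply Lemma \ref{lem:unique-multi} to each inner subcall $\varopt_{k_x}(I_x)$, which is allowed since by assumption each satisfies (i) and (ii). This tells us that the multiset of adjusted weight values produced by $\varopt_{k_x}(I_x)$ is a deterministic function of $I_x$. In particular, the inner threshold $\tauxk{x}{k_x}$ is fixed, and whenever $i \in I^{(1)}_x$ its adjusted weight equals $\max\{w_i, \tauxk{x}{k_x}\}$ regardless of which other items of $I_x$ were sampled along with it.

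Taking the multi-set union over $x$, the multiset $M$ of adjusted weight values in $I^{(1)} = \bigcup_x I^{(1)}_x$ is then also deterministic (it is the same $M$ for every realization of $I^{(1)}$). Since the outer subcall satisfies (i), its threshold $\tau'_k$ is the unique solution of $\sum_{w \in M} \min\{1, w/\tau'_k\} = k$, which depends only on the multiset $M$. Hence $\tau'_k$ is a fixed constant, the same under every realization of $I^{(1)}$.

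Finally, for any $i \in I$ and any realization $I^{(1)}$ containing $i$, property (i) applied to the outer subcall gives $p^{(1)}[i \mid I^{(1)}] = \min\{1, w'_i/\tau'_k\}$, where $w'_i$ is the adjusted weight of $i$ in $I^{(1)}$. By the first paragraph, $w'_i$ (conditional on $i \in I^{(1)}$) is a deterministic function of $w_i$ alone, and by the second paragraph $\tau'_k$ is deterministic as well. Therefore $p^{(1)}[i \mid I^{(1)}]$ takes the same value for every realization of $I^{(1)}$ containing $i$. Defining $p^{(1)}[i]$ to be this common value and $q^{(1)}[i] = 1 - p^{(1)}[i]$ yields the desired consistency. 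There is no substantive obstacle here beyond the careful invocation of Lemma \ref{lem:unique-multi}; the main conceptual point is that only the multiset, not the identities, of adjusted weights in $I^{(1)}$ matters to the outer subcall's threshold and hence to the marginal inclusion probability of any individual item.
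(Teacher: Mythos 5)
Your proof is correct and follows essentially the same route as the paper: the paper invokes Lemma~\ref{lem:tau-fixed} (itself derived from Lemma~\ref{lem:unique-multi}) to fix $\tau'_k$, then uses (i) on the inner subcall to get the deterministic conditional adjusted weight $\max\{w_i,\tauxk{x}{k_x}\}$ and (i) on the outer subcall to conclude $p^{(1)}[i\mid I^{(1)}]=\min\{1,\max\{w_i,\tauxk{x}{k_x}\}/\tau'_k\}$ is independent of the realization, exactly as you do. The only cosmetic difference is that you re-derive the threshold uniqueness from Lemma~\ref{lem:unique-multi} inline instead of citing Lemma~\ref{lem:tau-fixed} (and your phrase ``a deterministic function of $w_i$ alone'' should more precisely read ``of the weights in $I_x$,'' though this does not affect the argument).
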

\begin{proof}
Our assumptions are the same as those for Lemma \ref{lem:tau-fixed}, so
we know that the threshold $\tau'$ of the outer subcall is a unique
function of the weights in $I$. 
Consider any $i\in I$, and let $x$ be
unique index such that $i\in I_x$.  We assume that
$i\in\varopt_{k_x}(I_x)$, and by (i), we
get an adjusted weight of $w^{(1)}_i=\max\{w_i,\tau_{k_x}\}$.
This value is a function of the weights in
$I_x$, hence independent of which other items are included
in $I^{(1)}$. Be by (i) on the outer
subcall, we have that the probability that $i\in I^{(1)}$ survives
the final sampling is $\min\{1,w^{(1)}_i/\tau'\}=
\min\{1,\max\{w_i,\tau_{k_x}\}/\tau'\}$, hence a direct function
of the original weights in $I$.
\end{proof}
By Lemma \ref{lem:union} and \ref{consist_lemma}, the following implies 
Theorem~\ref{iiiplusforrecurrence}.
\begin{proposition} \label{twostage}
Consider consistent two stage sampling. 
If both stages satisfy (iii$+$)(I) (resp., (iii$+$)(E)), then so 
does the composition.
\end{proposition}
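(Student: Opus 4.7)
The plan is to split the proof into the inclusion and exclusion halves. In both halves I will condition on the intermediate sample $I^{(1)}$, push the bound through stage (1) using consistency, then invoke the corresponding (iii$+$) property for stage (0). The inclusion half is straightforward; the exclusion half needs a small auxiliary ``decreasing-function'' lemma.

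For (iii$+$)(I): write $p[J]=\sum_{I^{(1)}\supseteq J}\Pr[I^{(1)}]\,p^{(1)}[J\mid I^{(1)}]$. By (iii$+$)(I) on the outer subcall and consistency, $p^{(1)}[J\mid I^{(1)}]\le\prod_{i\in J}p^{(1)}[i\mid I^{(1)}]=\prod_{i\in J}p^{(1)}[i]$, a value independent of $I^{(1)}$. Pulling this constant out and summing the $\Pr[I^{(1)}]$ over $I^{(1)}\supseteq J$ gives $p^{(0)}[J]$. Applying (iii$+$)(I) to stage (0) then yields $p[J]\le\prod_{i\in J}p^{(0)}[i]p^{(1)}[i]=\prod_{i\in J}p[i]$, using that $p[i]=p^{(0)}[i]p^{(1)}[i]$ by consistency.

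For (iii$+$)(E), the symmetric conditioning gives
\[
q[J]=\sum_{I^{(1)}}\Pr[I^{(1)}]\,q^{(1)}[J\cap I^{(1)}\mid I^{(1)}]\le E\bigl[\textstyle\prod_{i\in J}(q^{(1)}[i])^{Z_i}\bigr],
\]
where $Z_i$ is the indicator that $i\in I^{(1)}$ and I have used (iii$+$)(E) on stage (1) together with consistency. Writing $g_i(0)=1$, $g_i(1)=q^{(1)}[i]$, each $g_i$ is a decreasing function on $\{0,1\}$ with $g_i\ge 0$, so the right-hand side is $E[\prod_i g_i(Z_i)]$. The key step is the lemma that, for decreasing non-negative $g_i$, the stage-(0) property (iii$+$)(E) implies $E[\prod_i g_i(Z_i)]\le\prod_i E[g_i(Z_i)]$. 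Combined with $E[g_i(Z_i)]=q^{(0)}[i]+p^{(0)}[i]q^{(1)}[i]=q[i]$, this finishes the exclusion case.

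The lemma itself I would prove by the substitution $g_i(Z_i)=g_i(1)+(g_i(0)-g_i(1))(1-Z_i)$, where both coefficients are non-negative. Expanding the product,
\[
\textstyle\prod_{i\in J}g_i(Z_i)=\sum_{K\subseteq J}\Bigl(\prod_{i\in K}(g_i(0)-g_i(1))\Bigr)\Bigl(\prod_{i\in J\setminus K}g_i(1)\Bigr)\prod_{i\in K}(1-Z_i),
\]
every coefficient is non-negative, so taking expectations and bounding each $E[\prod_{i\in K}(1-Z_i)]=q^{(0)}[K]$ by $\prod_{i\in K}q^{(0)}[i]$ using (iii$+$)(E) gives a term-by-term bound whose total equals $\prod_i\bigl(g_i(1)+(g_i(0)-g_i(1))q^{(0)}[i]\bigr)=\prod_iE[g_i(Z_i)]$. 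The main obstacle is precisely this exclusion step: unlike inclusion, an item can escape the final sample at either stage, so one must control $E[\prod_i(q^{(1)}[i])^{Z_i}]$ rather than a simple product of single-item survivals, and the trick is to expand in the variables $1-Z_i$ so that (iii$+$)(E) on stage (0) applies term-by-term with non-negative coefficients; using inclusion for stage (0) instead would produce alternating signs and fail.
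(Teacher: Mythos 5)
Your proposal is correct, and its inclusion half is the paper's argument verbatim: condition on $I^{(1)}$, apply (iii$+$)(I) to the outer subcall, use consistency to pull $\prod_{i\in J}p^{(1)}[i]$ out of the sum, and then apply (iii$+$)(I) to stage (0). For the exclusion half you start the same way as the paper---condition on the stage-(0) outcome, apply (iii$+$)(E) to the outer subcall and use consistency to replace $q^{(1)}[i\mid I^{(1)}]$ by $q^{(1)}[i]$---but you finish with a different key lemma. The paper expands $\overline{p}^{(0)}[J',J]$ by inclusion--exclusion (Lemma~\ref{incexc:lemma}), regroups the signed sum, observes that the resulting coefficient of $q^{(0)}[L]$ equals $\prod_{j\in J\setminus L}q^{(1)}[j]\prod_{\ell\in L}\bigl(1-q^{(1)}[\ell]\bigr)\ge 0$, only then applies (iii$+$)(E) on stage (0), and finally identifies the bound with $\prod_{j\in J}q[j]$ via a sum over 3-partitions of $J$. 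You instead package the stage-(0) step as a stand-alone product bound $\E[\prod_i g_i(Z_i)]\le\prod_i\E[g_i(Z_i)]$ for non-negative decreasing $g_i$ on $\{0,1\}$, proved by writing $g_i(Z_i)=g_i(1)+(g_i(0)-g_i(1))(1-Z_i)$ so that every coefficient in the multilinear expansion is manifestly non-negative and (iii$+$)(E) on stage (0) applies term by term to $\E[\prod_{i\in K}(1-Z_i)]=q^{(0)}[K]$. The two computations are term-by-term identical (your coefficient of $q^{(0)}[K]$ is exactly the paper's coefficient of $q^{(0)}[L]$ with $K=L$, since $g_i(0)-g_i(1)=p^{(1)}[i]=1-q^{(1)}[i]$), so this is a reorganization rather than a new argument; what it buys is that positivity is visible from the outset, with no inclusion--exclusion bookkeeping or 3-partition re-indexing, and a reusable correlation-style lemma, while the paper's route keeps the explicit $\overline{p}^{(0)}[J',J]$ bookkeeping and makes the final matching with $\prod_j q[j]$ an explicit algebraic identity. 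Your closing remark about why one must expand in $1-Z_i$ rather than in $Z_i$ is precisely the sign issue that the paper's non-negativity observation resolves.
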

As we shall see below, the inclusion part of Proposition~\ref{twostage} is
much easier than the exclusion part.
  
For  $J'\subseteq J\subseteq I$, we denote by
$\overline{p}^{(0)}[J',J]$ the probability that
items $J'$ are included and items $J\setminus J'$ are excluded
by stage (0) sampling.
In particular, $\overline{p}^{(0)}[I^{(1)},I]$ is the probability
that the outcome of stage (0) is $I^{(1)}$.
Note that we always have $\overline{p}^{(0)}[J',J]\leq p^{(0)}[J']$.
We  now  establish the easy inclusion part (I) of Proposition~\ref{twostage}. 
\begin{proof}[ of Proposition \ref{twostage}(I)]
\begin{eqnarray*}
p[J] & = & \sum_{I^{(1)}\mid J\subseteq I^{(1)}} \overline{p}^{(0)}[I^{(1)},I]p^{(1)}[J | I^{(1)}] \\
     & \le & \sum_{I^{(1)}\mid J\subseteq I^{(1)}} \overline{p}^{(0)}[I^{(1)},I]\prod_{j\in J}p^{(1)}[j] \\
     & = &  \prod_{j\in J}p^{(1)}[j]  p^{(0)}[J]\\
     & \leq  &  \prod_{j\in J}p^{(0)}[j] \prod_{j\in J}p^{(1)}[j] \\
     & = &  \prod_{j\in J}p^{(0)}[j] p^{(1)}[j] =\prod_{j\in J}p[j]
\end{eqnarray*}
\end{proof}

Next we establish the much more tricky exclusion part of 
Proposition~\ref{twostage}. First we need
\begin{lemma} \label{incexc:lemma}
\begin{equation}\label{eq:incexc}
\overline{p}^{(0)}[J',J] = \sum_{H\subseteq J'} (-1)^{|H|}q^{(0)}[H\cup (J\setminus J')]\end{equation}
\end{lemma}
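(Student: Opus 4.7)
The plan is to recognize the identity as a standard inclusion--exclusion expansion applied to the exclusion events of stage (0). For each $i\in I$, let $A_i$ denote the event ``item $i$ is excluded by stage (0)''. Then by definition, for any $K\subseteq I$ we have $q^{(0)}[K]=\Pr[\bigcap_{i\in K}A_i]$, and the quantity of interest can be written as
\[
\overline{p}^{(0)}[J',J]=\E\!\left[\prod_{i\in J\setminus J'}\mathbf{1}_{A_i}\cdot\prod_{j\in J'}(1-\mathbf{1}_{A_j})\right],
\]
since we require the items in $J\setminus J'$ to be excluded and those in $J'$ to be included.

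Next I would expand the product $\prod_{j\in J'}(1-\mathbf{1}_{A_j})$ using the elementary identity
\[
\prod_{j\in J'}(1-\mathbf{1}_{A_j})=\sum_{H\subseteq J'}(-1)^{|H|}\prod_{h\in H}\mathbf{1}_{A_h}.
\]
Multiplying through by $\prod_{i\in J\setminus J'}\mathbf{1}_{A_i}$ and using the fact that for disjoint index sets $\prod_{i\in J\setminus J'}\mathbf{1}_{A_i}\prod_{h\in H}\mathbf{1}_{A_h}=\mathbf{1}_{\bigcap_{i\in (J\setminus J')\cup H}A_i}$ (recall $H\subseteq J'$ and $J'\cap(J\setminus J')=\emptyset$), I would then take expectations term by term to obtain
\[
\overline{p}^{(0)}[J',J]=\sum_{H\subseteq J'}(-1)^{|H|}\,\Pr\!\left[\bigcap_{i\in H\cup (J\setminus J')}A_i\right]=\sum_{H\subseteq J'}(-1)^{|H|}q^{(0)}[H\cup(J\setminus J')],
\]
which is exactly \req{eq:incexc}.

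There is essentially no technical obstacle here; the content is purely combinatorial bookkeeping through the binomial expansion of a product of $(1-\mathbf{1})$ factors. The main thing to be careful about is keeping the two roles of $J'$ (indices forced to be included, hence expanded into a signed sum) and $J\setminus J'$ (indices forced to be excluded, hence contributing a fixed $\mathbf{1}_{A_i}$ factor) straight throughout the manipulation.
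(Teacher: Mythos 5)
Your proof is correct and is essentially the same argument as the paper's: both are inclusion--exclusion over the subsets $H\subseteq J'$, the only cosmetic difference being that you expand the product of indicators $\prod_{j\in J'}(1-\mathbf{1}_{A_j})$ algebraically and take expectations, whereas the paper writes $\overline{p}^{(0)}[J',J]$ as $\Pr[B]-\Pr[\bigcup_{j\in J'}A_j]$ and applies the inclusion--exclusion principle to the union of exclusion events.
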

\begin{proof}
Let $B$ be the event that $J\setminus J'$ are excluded from the sample.
For $j\in J'$, let $A_j$ be the event 
that $\{j\}\cup J\setminus J'$ are excluded from the sample..

 From definitions,
\begin{equation} \label{incexc1}
\overline{p}^{(0)}[J',J]  =  \Pr[B]-\Pr[\bigcup_{j\in J'} A_j]\ .
\end{equation}
 Applying the general inclusion exclusion principle, we
obtain
\begin{eqnarray}
\Pr[\bigcup_{j\in J'} A_j] & = & \sum_{H \,|\, \emptyset\neq H\subseteq J'} (-1)^{|H|+1} \Pr[\bigcap_{j\in H} A_j]\nonumber \\
 & = & \sum_{H\,|\,\emptyset\neq H\subseteq J'} (-1)^{|H|+1} q^{(0)}[H\cup (J\setminus J')]\label{incexcsub2} 
\end{eqnarray}
Now \req{eq:incexc} follows using $\Pr[B]=q^{(0)}[J\setminus J']$ and \req{incexcsub2}
in \req{incexc1}.
\end{proof}
We are now ready to establish the exclusion part (E) 
of Proposition \ref{twostage} stating that with consistent
two stage sampling, if both stages satisfy (iii$+$)(E),
then so does the composition.
\begin{proof}[ of Proposition \ref{twostage}(E)]

We need to show that $q[J] \leq \prod_{j\in J}q[j]$
\begin{eqnarray} 
q[J] & = &\sum_{I^{(1)}, J'=J\cap I^{(1)}}\overline{p}^{(0)}[I^{(1)},I]\,q^{(1)}[J'\,|\,I^{(1)}]\nonumber \\
& = &\sum_{J'\subseteq J} \left(\sum_{I^{(1)}\,|\, I^{(1)} \cap J=J'} \overline{p}^{(0)}[I^{(1)},I]q^{(1)}[J'|I^{(1)}]\right)\nonumber \\
& \leq &\sum_{J'\subseteq J} \left(\sum_{I^{(1)}\,|\, I^{(1)} \cap J=J'} \overline{p}^{(0)}[I^{(1)},I]\prod_{j\in J'} q^{(1)}[j|I^{(1)}]\right)\label{appl:E1} \\
& = &\sum_{J'\subseteq J} \left(\sum_{I^{(1)}\,|\, I^{(1)} \cap J=J'} \overline{p}^{(0)}[I^{(1)},I]\prod_{j\in J'} q^{(1)}[j]\right)\label{appl:consist1} \\
& = &\sum_{J'\subseteq J}\left(\sum_{I^{(1)}\,|\, I^{(1)} \cap J=J'} 
\overline{p}^{(0)}[I^{(1)},I]\right)\prod_{j\in J'} q^{(1)}[j]\nonumber \\
& = & \sum_{J'\subseteq J} \overline{p}^{(0)}[J',J] \prod_{j\in J'} q^{(1)}[j] \label{eq_exc} 
\end{eqnarray}
Above, for \req{appl:E1}, we applied (iii$+$)(E) on stage (1), and
for \req{appl:consist1}, we applied consistency.
We now apply Lemma~\ref{incexc:lemma} to $\overline{p}^{(0)}[J',J]$ 
in (\ref{eq_exc}), and get
\begin{eqnarray}
q[J] & \leq & \sum_{J'\subseteq J} \left(\sum_{H\subseteq J'} (-1)^{|H|}\,q^{(0)}[H\cup (J\setminus J')]\right) \prod_{j\in J'} q^{(1)}[j]\nonumber\\
 & = & \sum_{(L,H,J') | H\subseteq L\subseteq J,\; J'=H\cup (J\setminus L)} (-1)^{|H|}\,q^{(0)}[L] \prod_{j\in J'} q^{(1)}[j]\nonumber\\
 & = & \sum_{L\subseteq J} q^{(0)}[L] \prod_{j\in J\setminus L} q^{(1)}[j] 
\left(\sum_{H\subseteq L}  \prod_{h\in H} \left(-q^{(1)}[j]\right)\right)\nonumber
\end{eqnarray}
Note the convention that the empty product 
$\prod_{h\in \emptyset} \left(-q^{(1)}[h]\right)\equiv 1$. Observe that
$\sum_{H\subseteq L} \prod_{h\in H} \left(-q^{(1)}[h]\right)\ =
\ \prod_{\ell\in L}\left(1-q^{(1)}[\ell]\right)$ is non-negative.  Moreover,
from (iii$+$)(E) on stage (0), we have $q^{(0)}[L] \leq
\prod_{\ell\in L}q^{(0)}[\ell]$. Hence, we get
\begin{eqnarray}
q[J] & \leq & 
\sum_{L\subseteq J} \prod_{\ell\in L}q^{(0)}[\ell] \prod_{j\in J\setminus L} q^{(1)}[j] 
\left(\sum_{H\subseteq L}  \prod_{h\in H} \left(-q^{(1)}[j]\right)\right)\nonumber\\
 & = & \sum_{(L,H) | H\subseteq L\subseteq J}\ \prod_{\ell\in L\setminus H}q^{(0)}[\ell] 
\prod_{j\in J\setminus L} q^{(1)}[j]\prod_{h\in H}\left(- q^{(0)}[h]q^{(1)}[h]\right)\label{lhs_exc}.
\end{eqnarray}
To prove $q[J]\leq \prod_{j\in J}q[j]$, we re-express $\prod_{j\in J}q[j]$ to show that it is
equal to \req{lhs_exc}.
\begin{eqnarray}
\prod_{j\in J}q[j]& = & 
\prod_{j\in J} \left(q^{(0)}[j]+(1-q^{(0)}[j])q^{(1)}[j]\right)\nonumber\\
 &=& \prod_{j\in J} (q^{(0)}[j]+q^{(1)}[j] -q^{(0)}[j]q^{(1)}[j])\label{rhs_exc}
\end{eqnarray}
Now \req{rhs_exc} equals \req{lhs_exc} because $(L\setminus H,J\setminus L, H)$
ranges over all 3-partitions of $J$. 
\end{proof}
We have now proved both parts of Proposition \ref{twostage} which
together with Lemma \ref{lem:union} and \ref{consist_lemma} implies
Theorem~\ref{iiiplusforrecurrence}. Thus we conclude that
any $\varoptk$ scheme generated from $\varoptkbase$ and recurrence
\req{eq:gen-recurse} satisfies (i)--(iii$+$).

\subsection{From (iii$+$) to Chernoff bounds}
We now want to prove the Chernoff bounds of Theorem \ref{thm:sri}. We give
a self-contained proof but many of the calculations are 
borrowed from~\cite{PancSri:sicomp97,Sri01}. The
basic setting is as follows. Let $I\subseteq [n]$ and $m=|I|$. For $i\in I$, let $X_i$ be a random 0/1 variable which is $1$ with
probability $p_i$ and $0$ otherwise. The variables may not be
independent. Let
$X_I=\sum_{i\in I} X_i$, and $\mu = E[X]=\sum_{i\in I} p_i$. Finally
let $0 < a < m$. Now Theorem  \ref{thm:sri} falls in two statements:{\em
\begin{itemize}
\item[(I)] If (iii$+$)(I) is satisfied and $a\geq\mu$, then 
\[\Pr[X_I \ge a] \leq
\left(\frac{m-\mu}{m-a}\right)^{m-a}\left(\frac{\mu}{a}\right)^a
\]
\item[(E)] If (iii$+$)(E) is satisfied and $a\leq\mu$, then 
\[
\Pr[X_I \le a] \leq
\left(\frac{m-\mu}{m-a}\right)^{m-a}\left(\frac{\mu}{a}\right)^a\]
\end{itemize}}
We will now show that it suffices to prove Theorem  \ref{thm:sri} (I).
\begin{lemma} \label{thm:sri-}
Theorem  \ref{thm:sri} 
(I) implies Theorem  \ref{thm:sri} (E).
\end{lemma}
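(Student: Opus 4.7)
The plan is a simple duality/complementation argument: (iii$+$)(E) for a system of indicator variables $(X_i)_{i\in I}$ is the same as (iii$+$)(I) for the complementary variables $Y_i := 1-X_i$, so a lower-tail bound on $X_I$ translates into an upper-tail bound on $Y_I=m-X_I$, to which we can apply Theorem~\ref{thm:sri}(I) directly.

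First I would define $Y_i = 1-X_i$ for each $i\in I$. Then $\Pr[Y_i=1]=q_i=1-p_i$, so the inclusion probabilities of the $Y$'s are the exclusion probabilities of the $X$'s. The assumption (iii$+$)(E) on the $X$'s says $\Pr[\forall i\in J:X_i=0]\leq \prod_{i\in J}q_i$, which is exactly
\[
\Pr[\forall i\in J:Y_i=1]\ \leq\ \prod_{i\in J}\Pr[Y_i=1],
\]
i.e.\ condition (iii$+$)(I) for the $Y$ system.

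Next, set $Y_I=\sum_{i\in I}Y_i=m-X_I$ and $\mu_Y=\E[Y_I]=m-\mu$, and let $a'=m-a$. Since $a\leq \mu$ by the hypothesis of part (E), we have $\mu_Y=m-\mu\leq m-a=a'$, so the hypothesis of part (I) is satisfied for the $Y$ system with threshold $a'$. Moreover the event $\{X_I\leq a\}$ is exactly $\{Y_I\geq a'\}$.

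Finally I would invoke Theorem~\ref{thm:sri}(I) applied to the $Y$ system:
\[
\Pr[X_I\leq a]=\Pr[Y_I\geq a']\ \leq\ \left(\frac{m-\mu_Y}{m-a'}\right)^{m-a'}\!\left(\frac{\mu_Y}{a'}\right)^{a'}
= \left(\frac{\mu}{a}\right)^{a}\!\left(\frac{m-\mu}{m-a}\right)^{m-a},
\]
where in the last step I substitute $\mu_Y=m-\mu$ and $a'=m-a$; the two factors swap roles and reproduce the bound in part (E). There is no real obstacle here beyond checking that the hypothesis transforms correctly under complementation; the rest is bookkeeping.
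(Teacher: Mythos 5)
Your proposal is correct and follows essentially the same route as the paper's own proof: complement the indicators via $Y_i=1-X_i$, observe that (iii$+$)(E) for the $X$'s becomes (iii$+$)(I) for the $Y$'s, and apply part (I) to $Y_I=m-X_I$ with threshold $m-a$. Your explicit check that $a\leq\mu$ translates into the hypothesis $m-a\geq m-\mu$ needed for part (I) is a small point the paper leaves implicit, but otherwise the arguments coincide.
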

\begin{proof}
Define random 0/1 variables $Y_i = 1 - X_i$, $i \in [n]$, let $Y =
\sum_{i\in [n]} Y_i$, $\gamma = \E[Y]$, and $b=m-a$. Note that $Y = m - X$ and 
$\gamma = m - \mu$. We have that
$$\Pr[X \le a] =  \Pr[m-X \ge m-a] = \Pr[Y \ge b] \ . $$
Now if $X_i$ satisfies (iii$+$)(E) then $Y_i$
satisfies (iii$+$)(I) so we can apply Theorem \ref{thm:sri} (I) to $Y$ and $b$
and get 
\[
\Pr[Y \ge b]
\leq \left(\frac{\gamma}{b}\right)^{b}\left(\frac{m-\gamma}{m-b}\right)^{m-b}\\
= 
\left(\frac{m-\mu}{m-a}\right)^{n-a}\left(\frac{\mu}{a}\right)^a\ ,
\]
Hence Theorem \ref{thm:sri} (E) follows.
\end{proof}
We will now prove the Chernoff bound of Theorem \ref{thm:sri} (E). 
The traditional proofs of such bounds for $X=\sum X_i$ assume that the 
$X_i$s are independent and uses the equality
$\E[e^{tX}] = \prod_i \E[e^{tX_i}]$. Our $X_i$ are not independent, 
but we have something as good, essentially proved in \cite{PancSri:sicomp97}.
\begin{lemma}\label{lem:etS}
Let $X_1,...,X_m$ be random 0/1 variables satisfying
(iii$+$)(I), that is, for any $J\subseteq[n]$,
\[\Pr[\prod_{j\in J}X_j=1]\leq \prod_{j\in J}\Pr[X_j=1].\]
Let $I\subseteq[n]$ and $X=\sum_{i\in I} X_i$. Then for any $t\geq 0$,
\[\E[e^{tX}] \le \prod_{i\in I} \E[e^{tX_i}].\]
\end{lemma}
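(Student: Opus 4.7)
The plan is to exploit the fact that each $X_i$ takes only the values $0$ or $1$, which lets us linearize $e^{tX_i}$ and reduce the entire inequality to the joint inclusion bound (iii$+$)(I). Specifically, since $X_i \in \{0,1\}$, we have the exact identity $e^{tX_i} = 1 + (e^t - 1)X_i$, so
\[
e^{tX} \;=\; \prod_{i\in I} e^{tX_i} \;=\; \prod_{i\in I}\bigl(1 + (e^t-1)X_i\bigr) \;=\; \sum_{J \subseteq I} (e^t-1)^{|J|} \prod_{j \in J} X_j.
\]
Because each $X_j \in \{0,1\}$, the product $\prod_{j\in J}X_j$ is itself a 0/1 indicator, equal to $1$ precisely when all $j \in J$ satisfy $X_j=1$, so $\E\bigl[\prod_{j\in J}X_j\bigr] = \Pr[\prod_{j \in J} X_j = 1]$.

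Taking expectations term by term and using $t \geq 0$ (so $(e^t - 1)^{|J|} \geq 0$), the assumption (iii$+$)(I) gives
\[
\E[e^{tX}] \;=\; \sum_{J \subseteq I} (e^t-1)^{|J|} \Pr\!\Bigl[\prod_{j \in J} X_j = 1\Bigr] \;\leq\; \sum_{J \subseteq I} (e^t-1)^{|J|} \prod_{j \in J} p_j.
\]
On the other hand, expanding the independent-looking product in exactly the same way,
\[
\prod_{i \in I} \E[e^{tX_i}] \;=\; \prod_{i \in I}\bigl(1 + (e^t-1)p_i\bigr) \;=\; \sum_{J \subseteq I} (e^t-1)^{|J|} \prod_{j \in J} p_j,
\]
which matches the upper bound exactly, completing the proof.

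There is essentially no hard step here; the whole argument rides on the 0/1 linearization trick together with the sign condition $e^t - 1 \geq 0$ (which is why the lemma is restricted to $t \geq 0$). The only thing that requires care is making sure the inequality is applied term-by-term in the correct direction: for each individual $J$, (iii$+$)(I) bounds the joint inclusion probability from above, and the non-negative coefficient $(e^t-1)^{|J|}$ preserves that direction in the sum. If $t < 0$ were allowed, odd-sized $J$ would contribute negative coefficients and the argument would fail, which is consistent with the fact that the Chernoff upper tail only requires this $t\geq 0$ case.
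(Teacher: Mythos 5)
Your proof is correct. It does, however, take a somewhat different route from the paper's. The paper introduces independent copies $\hat X_i$ with the same marginals, proves the moment-domination inequality $\E[X^k]\le\E[\hat X^k]$ for every $k\ge 0$ by expanding $X^k$ via the multinomial theorem (using $X_i^j=X_i$ for $j\ge 1$ and applying (iii$+$)(I) termwise), and then sums the Maclaurin series of $e^{tX}$, where $t\ge 0$ guarantees nonnegative coefficients $t^k/k!$. You instead use the exact identity $e^{tX_i}=1+(e^t-1)X_i$ and expand $\prod_{i\in I}\bigl(1+(e^t-1)X_i\bigr)$ over subsets $J\subseteq I$, applying (iii$+$)(I) to each term with nonnegative coefficient $(e^t-1)^{|J|}$; the same expansion for the deterministic quantities $p_i$ reassembles the right-hand side exactly. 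Both arguments hinge on the same two ingredients — the idempotence of 0/1 variables reducing everything to joint inclusion probabilities, and $t\ge 0$ ensuring the termwise inequality is applied with the correct sign — but yours avoids the auxiliary independent coupling and the infinite series, giving a finite, self-contained computation; the paper's version buys the slightly stronger intermediate fact that all moments of $X$ are dominated by those of its independent counterpart, which is of some independent interest but not needed for the lemma. Your closing remark about why $t<0$ would break the argument is also accurate and mirrors the paper's reliance on $t\ge 0$.
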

\begin{proof}
For simplicity of notation, we assume $I=[m]=\{1,...,m\}$.
Let $\hat{X_1},....,\hat{X_m}$ be independent random 0/1
variables with the same marginal distributions as the $X_i$, that is,
$\Pr[X_i=1]=\Pr[\hat{X_i}=1]$. 
Let $\hat X=\sum_{i\in [m]}\hat{X_i}$. 
We will prove the lemma by proving
\begin{eqnarray}
\E[e^{tX}]& \le &\E[e^{t\hat X}]\label{eq:indep}\\
&=&\prod_{i\in [m]} \E[e^{t\hat{X_i}}]\ =\ \prod_{i\in [m]} \E[e^{tX_i}]\nonumber
\end{eqnarray}
Using Maclaurin expansion $e^{x} = \sum_{k\geq 0}
\frac{x^k}{k!}$, so for any random variable $X$, 
$\E[e^{tX}] = 
\sum_{k\geq 0}\frac{t^k {\sf E}[ X^k]}{k!}$.
Since $t\geq 0$, \req{eq:indep} follows if we for every $k$ can
prove that $\E[ X^k]\leq \E[\hat{X}^k]$.
We have
\begin{eqnarray} \E[X^k ] & = & \sum_{j_1,\ldots,j_m \mid j_1+ \cdots + j_m = m} {m \choose {j_1,\dots,j_m}} \E[\prod_{i\in[m]} X_i^{j_i}] \label{linearity:eq} \\
& = & \sum_{j_1,\ldots,j_m \mid j_1+ \cdots + j_m = m} {m \choose {j_1,\dots,j_m}} \E[\prod_{i \in [m]| j_i\geq 1} X_i] \label{power:eq} \\
& \le & \sum_{j_1,\ldots,j_m \mid j_1+ \cdots + j_m = m} {m \choose {j_1,\dots,j_m}} \prod_{i \in [m]| j_i\geq 1}   \E[X_i] \label{excp:eq} \\ 
& = & \sum_{j_1,\ldots,j_m \mid j_1+ \cdots + j_m = m} {m \choose {j_1,\dots,j_m}} \prod_{i\in [m]}   \E[X_i^{j_i}] \label{invpower:eq} \\
& = & \sum_{j_1,\ldots,j_m \mid j_1+ \cdots + j_m = m} {m
\choose {j_1,\dots,j_m}} \prod_{i\in [m]}  \E[\hat{X}_{i}^{j_i}] \nonumber \\
  & = & \E[\hat{X}^k ] \label{invlinearity:eq} \ ,
\end{eqnarray}
Here  \req{linearity:eq} and \req{invlinearity:eq} 
follow from linearity of expectation,  \req{power:eq} and \req{invpower:eq} follow from the fact that if $X_i$ is a 
random 0/1 variable then for $j\geq 1$, $X_i^j = X_i$, and \req{excp:eq} follows using (iii$+$)(I).
Hence $\E[e^{tX}] \le
\E[e^{t\hat{X}}]$ as  claimed in \req{eq:indep}. 
\end{proof}
Using Lemma \ref{lem:etS}, we can mimic the standard
proof of Theorem \ref{thm:sri} (I) done with independent variables.
\begin{proof}[ of Theorem~\ref{thm:sri} (I)]
For every $t >0 $
$$
\Pr[X \ge a] = \Pr[e^{tX} \ge e^{ta}] \ .
$$
Using Markov inequality it follows that
\begin{equation} \label{eq:markov}
\Pr[X \ge a] \le \frac{\E[e^{tX}]}{e^{ta}} \ .
\end{equation}
Using Lemma \ref{lem:etS} and arithmetic-geometric
mean inequality,  we get that
\begin{eqnarray*}
\E[e^{tX}] & \leq & \prod_{i\in [n]} \E[e^{t X_i}]\\
& = & \prod_{i\in [n]} \left( 1+ p_i(e^t -1)
\right) \\
& \le & \left( \frac{\sum_{i\in [n]} (1+ p_i(e^t -1))}{m} \right)^m \\
& = & \left( 1 + \frac{\mu}{m} (e^t - 1) \right)^m
\end{eqnarray*}
Substituting this bound into Equation (\ref{eq:markov}) we get that
\begin{equation} \label{eq:y2y}
\Pr[X \ge a] \le \frac{\left( 1 + \frac{\mu}{m} (e^t - 1)
\right)^m}{{e^{t\,a}}} \ .
\end{equation}
Substituting 
$$e^t= \frac{a(n-\mu)}{\mu(n-a)}$$ 
into the right hand side of Equation
(\ref{eq:y2y}) we obtain that
\[
\Pr[X \ge a]  \le   \frac{\left(  1 + \frac{\mu}{m} \left(
\frac{a(n-\mu)}{\mu(n-a)} - 1 \right) \right)^m}{{\left(
\frac{a(n-\mu)}{\mu(n-a)}\right)^{a}}}  
 = \frac{\left(\frac{m-\mu}{m-a}\right)^m}{{\left(
\frac{a(n-\mu)}{\mu(m-a)}\right)^{a}}}   
 = 
 \left(\frac{n-\mu}{m-a}\right)^{m-a}\left(\frac{\mu}{a}\right)^a\ ,
\]
as desired.
\end{proof}
In combination with Lemma \ref{thm:sri-} this completes the proof
of Theorem \ref{thm:sri}. As described in Section \ref{sec:relevance},
this implies that we can use the Chernoff bounds \req{chernoff:upper} and \req{chernoff:lower}
to bound the probability of deviations in our weight estimates.


\subsection{Concluding remarks}\label{sec:conclusion}
We presented a general recurrence generating $\varoptk$ schemes for
variance optimal sampling of $k$ items from a set of weighted
items. The schemes provides the best possible average
variance over subsets of any given size. The recurrence
covered previous schemes of Chao and Till\'e \cite{Cha82,Til96}, but
it also allowed us to derive very efficient $\varoptk$ schemes for a streaming
context where the goal is to maintain a reservoir with a sample
of the items seen thus far. We demonstrated the estimate quality experimentally 
against natural competitors such as ppswor and priority sampling.
Finally we showed that the schemes of the recurrence also admits
the kind Chernoff bounds that we normally associate with independent Poisson
sampling for the probability of large deviations.

In this paper, each item is indepdenent. In subsequent work \cite{CDKLT09},
we have considered the unaggregated case where the stream of item have keys, 
and where we are interested in the total weight for each key. Thus, instead
of sampling items, we sample keys. As we sample keys for the reservoir, we
do not know which keys are going to reappear in the future, and for that 
reason, we cannot do a variance optimal sampling of the keys. Yet
we use the variance optimal sampling presented here as a local subroutine.
The result is a heuristic that in experiments outperformed classic
schemes for sampling of unaggregated data like 
sample-and-hold \cite{EV:ATAP02,GM:sigmod98}.

\bibliography{varopt,cycle,replace}
\bibliographystyle{plain}
\appendix
\section{Auxiliary variables}\label{sec:aux-proof}
Continuing from Section \ref{sec:aux} we now consider the case where we for each item are interested in an auxiliary weight $w'_i$.
For these we use the estimate 
\[{\hat w}'_i=w'_i\hat w_i/w_i\]
Let $\VS'=\sum_{i\in[n]}{\hat w}'_i$ 
be the variance on the estimate of the total for the auxiliary variables. 
We want to argue that we expect to do best possible on $\VS'$ using $\varoptk$ that minimizes $\SV$ and $\VS$,
assuming that the $w'_i$ are randomly generated from the $w_i$. Formally
we assume each $w'_i$ is generated as
\[w'_i=x_iw_i\]
where the $x_i$ is drawn independently from the same distribution $\Xi$. 
We consider
expectations $\E_\Xi$ for given random choices of the $x_i$, that is, formally
\[\E_\Xi[\VS']=\E_{x\leftarrow \Xi, i\in[n]}\left[\VS'\,|\,x\right]\]
We want to prove \req{eq:aux}
\[
\E_\Xi[\VS']=\var[\Xi]\SV+\E[\Xi]^2\VS\textnormal,
\]
where $\var[\Xi]=\var_\Xi[x_i]$ and $\E[\Xi]=\E_\Xi[x_i]$ for every $x_i$.
Note that if the $x_i$ are
0/1 variables, then the ${\hat w}'_i$ represent a random subset, including each item independently. This was one of the cases considered in \cite{ST07}. 
However, the general scenario is more like that in statistics where we can think of $w_i$ as a known
approximation of a real weight $w'_i$ which only becomes known if $i$ is actually sampled.
As an example, consider house hold incomes. The $w_i$ could
be an approximation based on street address, but we only find the real incomes $w'_i$ for
those we sample. What \req{eq:aux} states is that if the $w_i'$ are randomly generated
from the $w_i$ by multiplication with independent identically distributed random numbers, then
we minimize the expected variance on the estimate of the real total if our basic
scheme minimizes $\SV$ and $\VS$.

To prove \req{eq:aux}, write
\[\VS'=\SV'+\SCV'\quad\mbox{where}\quad
\SV'=\sum_{i \in [n]} \Var[\hat w'_i]\quad\mbox{and}\quad
\SCV'=\sum_{i,j \in [n]\,|\,i\neq j} \cov[\hat w'_i,\hat w'_j].\]
Here $\var[\hat w'_i]=x_i^2\var[\hat w_i]$ so $\E_\Xi[\var[\hat w'_i]]=\E_\Xi[x_i^2]\var[\hat w_i]$, 
so by linearity of 
expectation, 
\[\E_\Xi[\SV']=\E[\Xi^2]\SV.\]
Similarly, $\cov[\hat w'_i,\hat w'_j]=x_ix_j\cov[\hat w_i,\hat w_j]$ so 
$\E_\Xi[\cov[\hat w'_i,\hat w'_j]]=\E_\Xi[x_i]\E_\Xi[x_j]\cov[\hat w_i,\hat w_j]$, so
by linearity of expectation,
\[\E_\Xi[\SCV']=\E[\Xi]^2\SCV=\E[\Xi]^2(\VS-\SV).\]
Thus 
\[\E_\Xi[\VS']=\E_\Xi[\SV']+\E_\Xi[\SCV']=\E[\Xi^2]\SV+\E[\Xi]^2(\VS-\SV)=\var[\Xi^2]\SV+\E[\Xi]^2\VS\textnormal,\]
as desired.

\section{Bad case for ppswor}\label{sec:ppswor-bad}
We will now provide a generic bad instance for probability proportional to 
size sampling without replacement (ppswor) sampling $k$ out of $n$ items.
Even if ppswor is allowed $k+(\ln k)/2$ samples, it will perform
a factor $\Omega(\log k)$ worse on the average variance for any
subset size $m$ than the optimal scheme with $k$ samples. Since
the optimal scheme has $\VS=0$, it suffices to prove the
statement concerning $\SV$. The negative result is independent
of the ppswor estimator as long as unsampled items get
estimate $0$. The proof of this negative result is only sketched below.

Let $\ell=n-k+1$. The instance has $k-1$ items of size $\ell$ and $\ell$
unit items. The optimal scheme will pick all the large items and one
random unit item. Hence $\SV$ is $\ell(1-1/\ell)\ell<\ell^2$.

Now, with ppswor, there is some probability that a large item
is not picked, and when that happens, it contributes $\ell^2$ to
the variance. We will prove that with the first $k$ ppswor samples,
we waste approximately $\ln k$  samples on unit items, which are
hence missing for the large items, and even if we get half that
many extra samples, the variance contribution from missing
large items is going to be $\Omega(\ell^2\log k)$.

For the analysis, suppose we were going to sample all items
with ppswor. Let $u_i$ be the number of unit items we sample
between the $i-1$st and the $i$th large item. 
Each sample
we get has a probability of almost $(k-i)/(k-i+1)$ of being large.
We say almost because there may be less than $\ell$ remaining
unit items. However, we want to show w.h.p. that close to $\ln k$ unit
items are sampled, so for a contradiction, we can assume that
at least $\ell-\ln k\approx \ell$ unit items remain. As
a result, the expected number of unit items in the interval is 
$(k-i+1)/(k-i)-1=1/(k-i)$.
This means that by the time we get to the $k-\lceil\ln k\rceil$th
large item, the expected number of unit samples
is $\sum_{i=1}^{k-\lceil\ln k\rceil} 1/(k-i)\approx \ln k$.
Since we are adding almost independent random variables each
of which is at most one, we have a sharp concentration, so
by the time we have gotten to the $k-\lceil\ln k\rceil$ large
item, we have approximately $\ln k$ unit samples with high probability.

To get a formal proof using Chernoff bounds, for the number of
unit items between large item $i-1$ and $i$, we can use a pessimistic
0/1 random variable dominated be the above expected number. This
variable is 1 with probability $1/(k-i+1)(1-\ln k/\ell)$ which
is less than the probability that the next item is small, and
now we have independent variables for different rounds.

\end{document}